\makeatletter\@addtoreset {equation}{section}\makeatother
\newtheorem{theorem}{Theorem}
\newtheorem{lemma}{Lemma}
\newtheorem{remark}{Remark}
\newtheorem{corollary}{Corollary}
\newtheorem{proposition}{Proposition}
\newenvironment{proof}{
    \noindent {\it Proof.}}{\hfill$\Box$
}
\begin{document}

\title{\bf Bounds on the tight-binding approximation for
the Gross--Pitaevskii equation with a periodic potential}

\author{Dmitry Pelinovsky$^1$ and Guido Schneider$^2$ \\
{\small $^1$ Department of Mathematics, McMaster University,
Hamilton, Ontario, Canada, L8S 4K1} \\
{\small $^2$ Institut f\"{u}r Analysis, Dynamik und Modellierung,
Universit\"{a}t Stuttgart} \\ {\small Pfaffenwaldring 57, D-70569
Stuttgart, Germany } }

\date{\today}
\maketitle

\begin{abstract}
We justify the validity of the discrete nonlinear Schr\"{o}dinger
equation for the tight-binding approximation in the context of the
Gross--Pitaevskii equation with a periodic potential. Our
construction of the periodic potential and the associated Wannier
functions is based on the previous work \cite{PelSchn}, while our
analysis involving energy estimates and Gronwall's inequality
addresses time-dependent localized solutions on large but finite
time intervals.
\end{abstract}

\section{Introduction}

We consider the Gross--Pitaevskii (GP) equation with a periodic
potential in the form
\begin{equation}
i \phi_t = - \phi_{xx} + V(x) \phi + \sigma |\phi|^2 \phi,
\label{GP}
\end{equation}
where the solution $\phi : \mathbb{R} \times \mathbb{R}_+ \mapsto
\mathbb{C}$ decays to zero sufficiently fast as $|x| \to \infty$,
the potential $V : \mathbb{R} \mapsto \mathbb{R}$ is a bounded
$2\pi$-periodic function, and the parameter $\sigma = \pm 1$ is
normalized for the cubic nonlinear term. In particular, we
consider the piecewise-constant potential in the form
\begin{equation}
\label{potential-V} V(x) = \left\{ \begin{array}{cc}
\varepsilon^{-2}, \quad & x \in (0,a) \; {\rm mod}(2\pi) \\ 0, \quad
& x \in (a,2\pi) \; {\rm mod}(2\pi)
\end{array} \right.
\end{equation}
for some fixed $0 < a < 2\pi$ and small $\varepsilon > 0$. The
asymptotic limit of small $\varepsilon$ represents the so-called
tight-binding approximation, for which the potential $V(x)$ is a
periodic sequence of large walls of a non-zero width and the
lowest bands in the spectrum of the linear operator $L = -
\partial_x^2 + V(x)$ are exponentially narrow with respect to
$\varepsilon$. According to the tight-binding approximation
\cite{AKKS}, time-dependent solutions of the GP equation
(\ref{GP}) are approximated by the time-dependent solutions of the
discrete nonlinear Schr\"{o}dinger (DNLS) equation in the form
\begin{equation}
\label{DNLS} i \dot{\phi}_n = \alpha \left( \phi_{n+1} + \phi_{n-1}
\right) + \sigma \beta |\phi_n|^2 \phi_n,
\end{equation}
where $\alpha$ and $\beta$ are $\varepsilon$-independent constants
and the sequence $\{ \phi_n(t) \}_{n \in \mathbb{Z}}$ represents a
small-amplitude solution $\phi(x,t)$ evaluated at the periodic
sequence of potential wells.

We proved in the previous work \cite{PelSchn} that stationary
localized solutions of the GP equation in the form $\phi(x,t) =
\Phi(x) e^{-i \omega t}$ with $\Phi \in H^1(\mathbb{R})$ and
$\omega \notin \sigma(L)$ are approximated for small values of
$\varepsilon$ by stationary localized solutions of the DNLS
equation in the form $\phi_n(t) = \Phi_n e^{-i \Omega t}$, where
$\vec{\mbox{\boldmath $\Phi$}} \in l^1(\mathbb{Z})$ and $\Omega$
is related to the rescaled parameter $\omega$. Here and
henceworth, we use the standard notations for the Sobolev space
$H^1(\mathbb{R})$ of scalar complex-valued functions equipped with
the squared norm
$$
\| \phi \|_{H^1(\mathbb{R})}^2 = \int_{\mathbb{R}} \left( |
\phi'(x)|^2 + |\phi(x)|^2 \right) dx
$$
and the space $l^1(\mathbb{Z})$ of vectors representing
complex-valued sequences equipped with the norm $\|
\vec{\mbox{\boldmath $\phi$}} \|_{l^1(\mathbb{Z})} = \sum_{n \in
\mathbb{Z}} |\phi_n|$. In this work, we extend our analysis to
time-dependent localized solutions of these equations and prove
that the formal tight-binding approximation of \cite{AKKS} can be
justified for small values of $\varepsilon$ on large but finite
time intervals. It will be clear from our analysis that the
appropriate space for the time-dependent localized solution of the
GP equation (\ref{GP}) is associated with the quadratic form
generated by operator $-\partial_x^2 + V(x) + 1$. We denote this
space by ${\cal H}^1(\mathbb{R})$ and equip it with the squared
norm
\begin{equation}
\label{norm-space} \| \phi \|_{{\cal H}^1(\mathbb{R})}^2 =
\int_{\mathbb{R}} \left( | \phi'(x)|^2 + V(x) |\phi(x)|^2 +
|\phi(x)|^2 \right) dx.
\end{equation}
Since $V(x) \geq 0$ for all $x \in \mathbb{R}$, it is clear that
$\| \phi \|_{H^1(\mathbb{R})} \leq \| \phi \|_{{\cal
H}^1(\mathbb{R})}$.

Our analysis is closely related to the recent works on
justifications of nonlinear evolution equations for pulses that
exist in space-periodic potentials near edges of spectral bands
\cite{BSTU} and in narrow band gaps of one-dimensional \cite{SU} and
two-dimensional \cite{PSD} potentials. A similar work in the context
of a nonlinear heat equation with a periodic diffusive term was
developed in \cite{SV} with the invariant manifold reductions.
Although the justification of lattice equations for the
time-dependent solutions of dissipative (reaction--diffusion)
equations can be extended globally for $t \geq 0$, the justification
of the DNLS equation can only be carried out for finite time
intervals because the GP equation is a conservative (Hamiltonian)
system. Reductions to the DNLS equation on a finite lattice for a
finite time interval were also discussed in the context of the GP
equation with a $N$-well trapping potential \cite{BS}.

Methods of our analysis follow closely to arguments from
\cite{KSM} and rely on the Wannier function decomposition from
\cite{PelSchn} as well as on energy estimates and Gronwall's
inequality. The Wannier function decomposition is reviewed in
Section 2. The energy estimates and the bounds on the remainder
terms are studied in Section 3. The main theorem is formulated in
Section 2 and proved in Section 3.

\section{Wannier function decomposition}

Let $u_l(x;k)$ be a Bloch function of the operator $L = -
\partial_x^2 + V(x)$ for the eigenvalue $\omega_l(k)$, such that $l
\in \mathbb{N}$, $k \in \mathbb{T} = \left[
-\frac{1}{2},\frac{1}{2}\right] {\rm mod}(1)$, $u_l(x+2\pi;k) =
u_l(x;k) e^{i 2\pi k}$ for all $x \in \mathbb{R}$, and the following
orthogonality and normalization conditions are met
\begin{equation}
\label{orthogonality} \int_{\mathbb{R}} \bar{u}_{l'}(x,k')u_l(x,k)
dx = \delta_{l,l'} \delta(k - k'), \quad \forall l,l' \in
\mathbb{N}, \;\; \forall k,k' \in \mathbb{T},
\end{equation}
where $\delta_{l,l'}$ is the Kronecker symbol and $\delta(k)$ is
the Dirac delta function in the sense of distributions. To
normalize uniquely the phase factors of the Bloch functions
\cite{panati}, we assume that $u_l(x;-k) = \bar{u}_l(x;k)$ is
chosen as a Bloch function for $\omega_l(-k) = \bar{\omega}_l(k) =
\omega_l(k)$.

Since the band function $\omega_l(k)$ and the Bloch function
$u_l(x;k)$ are periodic with respect to $k \in \mathbb{T}$ for any
$l \in \mathbb{N}$, we represent them by the Fourier series
\begin{equation}
\label{band-function} \omega_l(k) = \sum_{n \in \mathbb{Z}}
\hat{\omega}_{l,n} e^{i 2\pi n k}, \quad u_l(x;k) = \sum_{n \in
\mathbb{Z}} \hat{u}_{l,n}(x) e^{i 2\pi n k}, \quad \forall l \in
\mathbb{N}, \; \forall k \in \mathbb{T},
\end{equation}
where the coefficients satisfy the constraints
\begin{equation}
\hat{\omega}_{l,n} = \hat{\bar{\omega}}_{l,-n} =
\hat{\omega}_{l,-n}, \quad \hat{u}_{l,n}(x) =
\hat{\bar{u}}_{l,n}(x), \quad \forall n \in \mathbb{Z}, \;\; \forall
l \in \mathbb{N}, \;\; \forall x \in \mathbb{R}
\end{equation}
and
\begin{equation}
\label{Wannier-shifts} \hat{u}_{l,n}(x) = \hat{u}_{l,n-1}(x-2\pi) =
\hat{u}_{l,0}(x-2 \pi n), \qquad \forall n \in \mathbb{Z}, \;
\forall l \in \mathbb{N}, \; \forall x \in \mathbb{R}.
\end{equation}
The real-valued functions $\hat{u}_{l,n}(x)$ are referred to as
the Wannier functions. The following two propositions from
\cite{PelSchn} summarize properties of the band and Wannier
functions for the potential $V(x)$ given by (\ref{potential-V}) in
the limit of small $\varepsilon > 0$.

\begin{proposition}
\label{proposition-bands} Let $V$ be given by (\ref{potential-V}).
For any fixed $l \in \mathbb{N}$, there exist $\varepsilon_0 > 0$
and $\varepsilon$-independent constants $\zeta_0, \omega_0, c_n >
0$, such that, for any $\varepsilon \in [0,\varepsilon_0)$, the band
functions of the operator $L = -\partial_x^2 + V(x)$ satisfy the
properties:
\begin{eqnarray}
\label{potential-1} & (i) & \mbox{(band separation)} \quad
\phantom{0 \leq \epsilon < \epsilon_0 : \quad} \min_{\forall m \in
\mathbb{N} \backslash \{l\}} \inf_{\forall k \in \mathbb{T}}
|\omega_m(k) - \hat{\omega}_{l,0} | \geq \zeta_0,
\\ & (ii) & \mbox{(band boundness)} \phantom{\forall 0 \leq
\epsilon < \epsilon_0 : \quad text} |\hat{\omega}_{l,0}| \leq \omega_0,  \label{potential-3a} \\
& (iii) & \mbox{(tight-binding approximation)} \quad
\phantom{text} | \hat{\omega}_{l,n} | \leq c_n \varepsilon^n
e^{-\frac{n a}{\varepsilon}}, \quad n \in \mathbb{N}.
\label{potential-3}
\end{eqnarray}
\end{proposition}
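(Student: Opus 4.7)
The plan is to exploit the explicit piecewise-constant structure of $V$ to reduce the spectral problem for $L = -\partial_x^2 + V(x)$ to a transfer-matrix computation over a single period, and then to track how the bands behave as $\varepsilon \to 0$ by comparison with the limiting Dirichlet problem on a single well of width $2\pi - a$.

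First, I would write down explicit solutions of $-u'' + Vu = \omega u$ on the well $(a,2\pi)$ and on the wall $(0,a)$: trigonometric solutions with frequency $\sqrt{\omega}$ on the former, and hyperbolic solutions with rate $\kappa(\omega) = \sqrt{\varepsilon^{-2} - \omega}$ on the latter. Matching values and derivatives across the two interfaces and imposing the Bloch condition $u(x + 2\pi) = e^{2\pi i k}u(x)$ produces a dispersion relation $\mathrm{tr}\, T(\omega;\varepsilon) = 2\cos(2\pi k)$ for an explicit $2\times 2$ monodromy matrix $T(\omega;\varepsilon)$. As $\varepsilon \to 0$, $\kappa \to \infty$ and the trace develops exponentially large excursions outside $[-2,2]$, so its preimage of $[-2,2]$ splits into exponentially narrow bands that accumulate near the Dirichlet eigenvalues $\mu_l = (l\pi/(2\pi-a))^2$ of $-\partial_x^2$ on a single well.

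Claims (i) and (ii) would follow directly from this picture: $\hat{\omega}_{l,0}$ is the average of $\omega_l(k)$ over $\mathbb{T}$, and it lies within distance $O(\varepsilon e^{-a/\varepsilon})$ of the fixed number $\mu_l$, hence is bounded by some $\varepsilon$-independent $\omega_0$; the distance from the $l$-th band to any other band $m$ can be bounded below by $\zeta_0 = \frac{1}{2}\min_{m \neq l}|\mu_m - \mu_l|$ for $\varepsilon$ sufficiently small. For (iii) I would use the Fourier representation $\hat{\omega}_{l,n} = \int_{\mathbb{T}} \omega_l(k) e^{-2\pi i n k}\, dk$ combined with the explicit form of $\omega_l(k)$ extracted from $\mathrm{tr}\, T(\omega;\varepsilon) = 2\cos(2\pi k)$ by implicit differentiation. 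Inverting this relation asymptotically yields an expansion of $\omega_l(k)$ in the harmonics $\cos(2\pi n k)$ in which the $n$-th coefficient carries one factor of $e^{-a/\varepsilon}$ per wall traversed and one factor of $\varepsilon$ per interface crossing, producing the bound $|\hat{\omega}_{l,n}| \leq c_n \varepsilon^n e^{-na/\varepsilon}$.

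The main obstacle is the sharp tracking of the polynomial prefactor $\varepsilon^n$. The leading exponential $e^{-na/\varepsilon}$ is transparent from WKB tunneling through $n$ consecutive walls, but the $\varepsilon^n$ factor requires careful asymptotic matching between the oscillatory amplitude inside a well and the exponentially small tail inside the adjacent wall, each interface contributing a ratio of order $\varepsilon$ between the two amplitudes. This matched-asymptotic analysis is the technical heart of the argument and was carried out in detail in \cite{PelSchn}, whose estimates we import here.
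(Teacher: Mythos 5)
The paper does not actually prove this proposition: it is imported verbatim from \cite{PelSchn} (``The following two propositions from \cite{PelSchn} summarize properties of the band and Wannier functions\ldots''), so there is no in-paper proof to compare against. Your sketch is nonetheless consistent with what the paper indicates about that reference --- the bands are ``computed from the trace of the monodromy matrix \cite{PelSchn}'' and the limiting profile (\ref{asymptotic-Wannier-function}) is exactly the Dirichlet eigenfunction on the well $(a,2\pi)$, confirming that $\hat{\omega}_{l,0}\to\mu_l=\bigl(l\pi/(2\pi-a)\bigr)^2$, which is the mechanism you invoke for (i) and (ii). Your outline of (iii) via the Fourier coefficients of $\omega_l(k)$ and tunneling through $n$ walls is the right idea, and you correctly identify the sharp $\varepsilon^n$ prefactor as the technical heart and defer it to \cite{PelSchn}, just as the paper does; the only imprecision is your bookkeeping heuristic (``one factor of $\varepsilon$ per interface crossing'' would give $\varepsilon^{2n}$, since each wall has two interfaces --- the correct count is one net factor of $\varepsilon$ per wall traversed), but since you explicitly flag this as the step requiring the careful matched asymptotics of \cite{PelSchn}, this does not constitute a gap in the proposed argument.
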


\begin{proposition}
\label{proposition-Wannier-limit} Let $V$ be given by
(\ref{potential-V}). For any fixed $l \in \mathbb{N}$, there exists
$\varepsilon_0 > 0$ and $\varepsilon$-independent constants $U_0,
C_0, C_n > 0$, such that, for any $\varepsilon \in
[0,\varepsilon_0)$, the Wannier functions of the operator $L =
-\partial_x^2 + V(x)$ satisfy the properties:
\begin{eqnarray}
\label{L-4-bound} & (i) & \mbox{(boundness of norms)} \quad
\phantom{ \forall 0 }
\| \hat{u}_{l,n} \|_{{\cal H}^1(\mathbb{R})} \leq U_0, \quad n \in \mathbb{N}, \\
& (ii) & \mbox{(compact support)} \quad \phantom{textte} |
\hat{u}_{l,0}(x) - \hat{u}_0(x) | \leq C_0 \varepsilon,
\;\;\forall x \in [0,2\pi],
\label{compact-Wannier} \\
& (iii) & \mbox{(exponential decay)} \quad \phantom{text} \left|
\hat{u}_{l,0}(x) \right| \leq C_n \varepsilon^n e^{-\frac{n
a}{\varepsilon}},   \label{decay-Wannier-strong} \\  \nonumber &
\phantom{t} & \phantom{texttexttexttexttexttexttexttext} \forall x
\in [-2\pi n, -2\pi(n-1)] \cup [2 \pi n,2\pi (n+1)], \;\; n \in
\mathbb{N},
\end{eqnarray}
where
\begin{equation}
\label{asymptotic-Wannier-function} \hat{u}_0(x) = \left\{
\begin{array}{ll} 0, \;\; & \forall
x \in [0,a], \\
\frac{\sqrt{2}}{\sqrt{2\pi - a}} \sin \frac{\pi l (2 \pi - x)}{2 \pi
- a}, \;\; & \forall x \in [a,2\pi].
\end{array} \right.
\end{equation}
\end{proposition}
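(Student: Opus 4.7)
The plan is to construct the Bloch functions $u_l(x;k)$ explicitly by exploiting that $V$ is piecewise constant, and then recover the Wannier functions via
$$
\hat{u}_{l,0}(x) = \int_{\mathbb{T}} u_l(x;k)\, dk.
$$
The leading-order picture is that, as $\varepsilon \to 0$, the barrier on $(0,a)$ diverges in height and the periodic eigenvalue problem decouples into isolated Dirichlet problems on $(a,2\pi)$; these yield the atomic orbitals $\hat{u}_0(x)$ in (\ref{asymptotic-Wannier-function}) with eigenvalues $(\pi l/(2\pi-a))^2$, and inter-well coupling is an $O(\varepsilon e^{-a/\varepsilon})$ tunneling correction. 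With this picture in mind, I would first match and then expand in $\varepsilon$.

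Concretely, on $(0,a)$ the general solution of $-u_l'' + \varepsilon^{-2} u_l = \omega_l(k) u_l$ is a combination of $e^{\pm x \sqrt{\varepsilon^{-2} - \omega_l(k)}}$ with decay rate $\sim 1/\varepsilon$, while on $(a,2\pi)$ it is a trigonometric combination with frequency $\sqrt{\omega_l(k)}$. Imposing $C^1$-matching at $x=a$ together with the quasi-periodic conditions $u_l(2\pi;k) = e^{i2\pi k} u_l(0;k)$, $u_l'(2\pi;k) = e^{i2\pi k} u_l'(0;k)$ yields a $4\times 4$ linear system whose solvability condition gives $\omega_l(k)$ and whose kernel gives $u_l(x;k)$. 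Sending $\varepsilon \to 0$ in this explicit solution shows $u_l(\cdot;k) \to \hat{u}_0$ uniformly on $[0,2\pi]$, hence upon integrating in $k$ one reads off (\ref{compact-Wannier}), yielding (ii). Part (i) then follows because the orthonormality (\ref{orthogonality}) and the Fourier series (\ref{band-function}) reduce the squared $\mathcal{H}^1$-norm of $\hat{u}_{l,0}$ to $\int_{\mathbb{T}} (\omega_l(k)+1)\,dk$, which is uniformly bounded by the band-boundedness property (\ref{potential-3a}).

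The main obstacle is part (iii). Since $|u_l(\cdot;k)|$ is periodic in $x$, the pointwise decay of $\hat{u}_{l,0}$ across shifted cells must come from cancellation in the $k$-average together with the smallness of the Bloch amplitude on distant cells. My plan is to propagate the Bloch function from the central cell to the $n$-th cell via the cell-to-cell transfer matrix; each crossing through a barrier of width $a$ attenuates the relevant amplitude by a factor whose dominant $\varepsilon$-behavior is $\varepsilon\, e^{-a/\varepsilon}$ (the exponential reflecting the decay rate $1/\varepsilon$ over width $a$, and the prefactor $\varepsilon$ arising from the matching coefficients at each discontinuity of $V$). Iterating this estimate $n$ times and then integrating in $k$ gives the bound (\ref{decay-Wannier-strong}) on $[-2\pi n, -2\pi(n-1)] \cup [2\pi n, 2\pi(n+1)]$. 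The bookkeeping required to carry out this transfer-matrix expansion uniformly in $k \in \mathbb{T}$ and in $n \in \mathbb{N}$, keeping both the $\varepsilon^n$ and $e^{-na/\varepsilon}$ factors tight, is the delicate part; it is exactly the computation executed in \cite{PelSchn}, on which I would rely.
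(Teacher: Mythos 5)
The paper itself offers no proof of this proposition: it is imported verbatim from \cite{PelSchn} (``The following two propositions from \cite{PelSchn} summarize properties\dots''), so there is no in-paper argument to compare against, and your outline must be judged on its own. Your treatment of (i) is correct and self-contained: since $L u_l(\cdot;k)=\omega_l(k)u_l(\cdot;k)$ and the Bloch functions are $\delta$-orthonormal, $\| \hat{u}_{l,0}\|^2_{{\cal H}^1(\mathbb{R})}=\langle (L+1)\hat{u}_{l,0},\hat{u}_{l,0}\rangle=\hat{\omega}_{l,0}+1\leq \omega_0+1$, and the case of general $n$ follows because $\hat{u}_{l,n}(x)=\hat{u}_{l,0}(x-2\pi n)$ and the norm is invariant under $2\pi$-shifts. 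Your plan for (ii) (explicit $4\times4$ matching for the piecewise-constant potential, then the $\varepsilon\to 0$ limit uniformly in $k$) is the natural route, though the uniformity in $k$ and the phase/sign normalization needed to get $\hat{u}_0$ rather than $-\hat{u}_0$ are asserted rather than checked.

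For (iii), however, the mechanism you describe is wrong as stated. You propose to ``propagate the Bloch function from the central cell to the $n$-th cell via the cell-to-cell transfer matrix,'' with each barrier crossing attenuating the amplitude by $\varepsilon e^{-a/\varepsilon}$. But for real $k$ in a spectral band the monodromy matrix has unimodular eigenvalues $e^{\pm i2\pi k}$, and $|u_l(x+2\pi;k)|=|u_l(x;k)|$ exactly: the Bloch function suffers no attenuation whatsoever from cell to cell, so iterating a per-cell decay factor $n$ times and ``then integrating in $k$'' cannot produce (\ref{decay-Wannier-strong}) --- if it did, it would also prove that $u_l(x;k)$ itself decays in $x$, which is false. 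The decay of $\hat{u}_{l,0}$ on the $n$-th cell comes entirely from the $k$-integration: writing $\hat{u}_{l,0}(x)=\int_{\mathbb{T}}e^{i2\pi nk}u_l(x-2\pi n;k)\,dk$ exhibits it as the $n$-th Fourier coefficient of $k\mapsto u_l(y;k)$, and the bound $C_n(\varepsilon e^{-a/\varepsilon})^n$ is equivalent to analyticity of $u_l(y;\cdot)$ (uniformly in $y$) in a complex strip of half-width of order $\bigl(a/\varepsilon+\log(1/\varepsilon)\bigr)/2\pi$, i.e.\ to locating the branch points where the $l$-th band connects to its neighbours and showing they recede at that rate as $\varepsilon\to0$. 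That contour-deformation/branch-point estimate (where the transfer matrix does enter, but at complex $k$ or complex energy) is the actual content of (iii); it is precisely the step you defer entirely to \cite{PelSchn}, so the decisive part of the proposition is not proved by your argument.
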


Figure \ref{fig-roots} illustrates the spectrum of $L$ and the
Wannier functions for the potential $V$ in (\ref{potential-V})
with $a = \pi$ and $\varepsilon = 0.5$. The left panel shows the
first spectral bands of $L$ computed from the trace of the
monodromy matrix \cite{PelSchn}. The right panel shows the Wannier
function $\hat{u}_{1,0}(x)$ computed by using the integral
representation $\hat{u}_{1,0}(x) = \int_{\mathbb{T}} u_1(x;k) dk$
and the finite-difference approximation of the Bloch function
$u_1(x;k)$. The solid lines for the Wannier function
$\hat{u}_{1,0}(x)$ approach the dotted line for the asymptotic
approximation (\ref{asymptotic-Wannier-function}) with $l = 1$ as
$\varepsilon$ gets smaller.

\begin{figure}[htbp]
\begin{center}
\includegraphics[height=6cm]{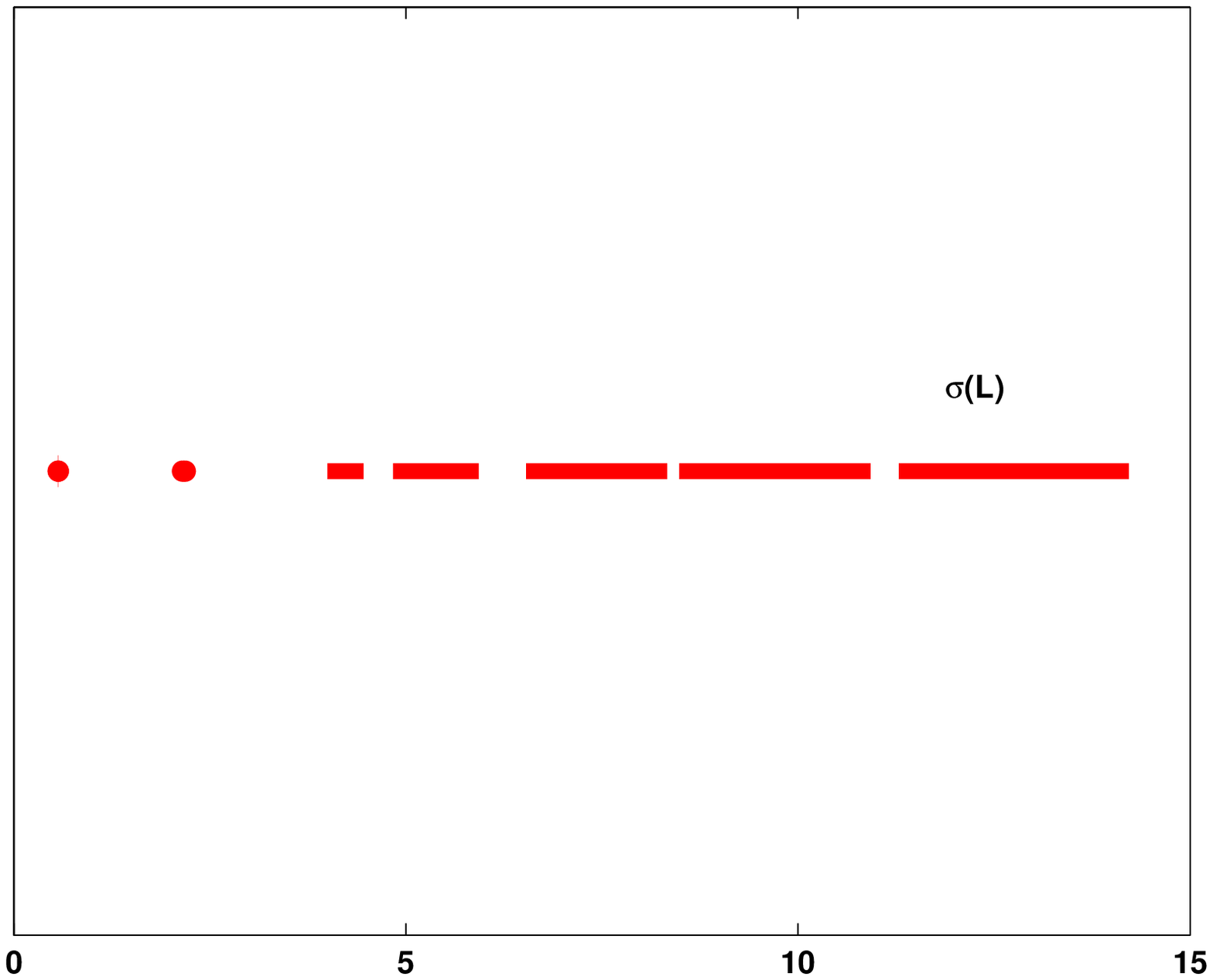}
\includegraphics[height=6cm]{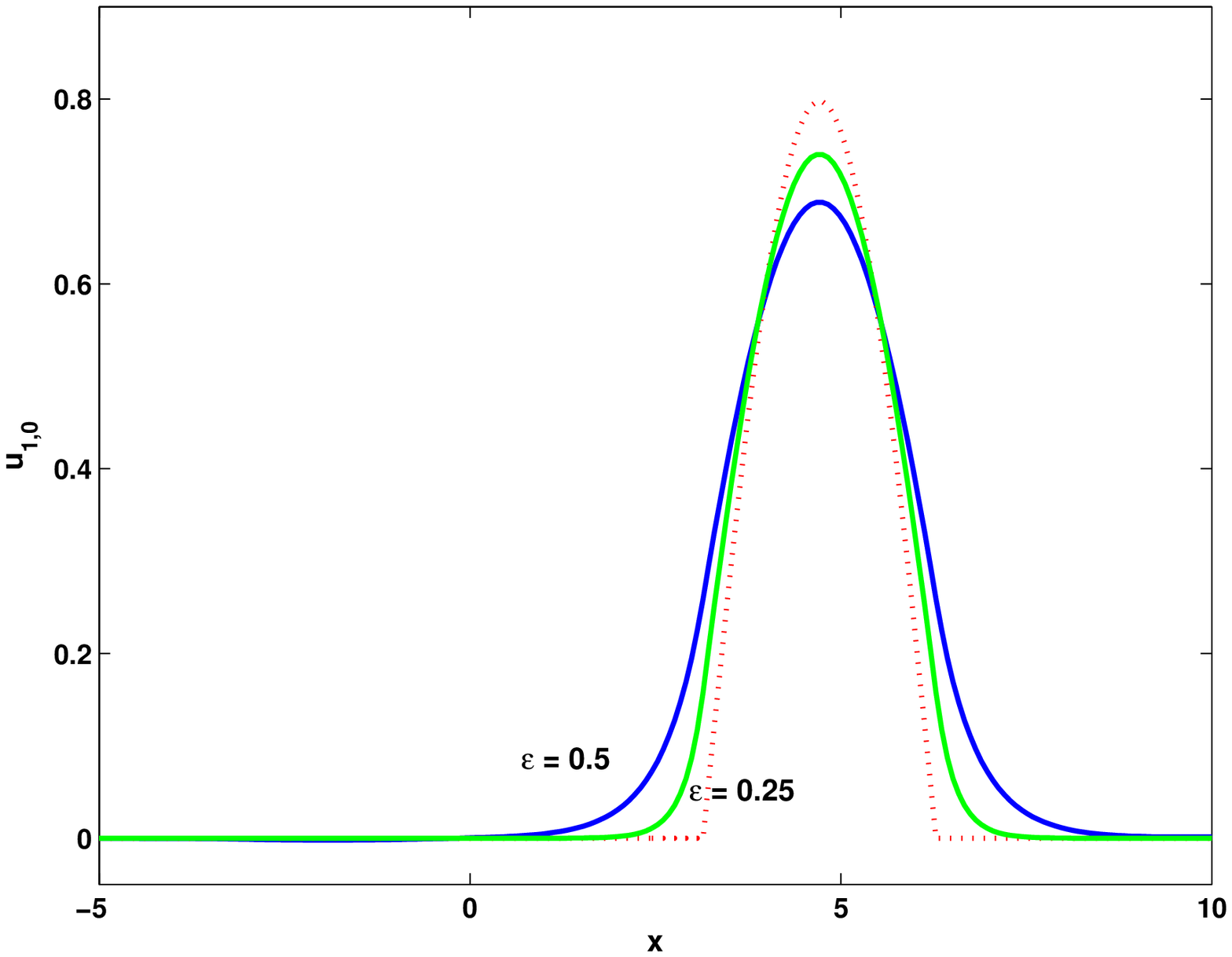}
\end{center}
\caption{Left: the band-gap structure of the spectrum $\sigma(L)$
for $\varepsilon = 0.5$. Right: the Wannier functions
$\hat{u}_{1,0}(x)$ for $\varepsilon = 0.5$ and $\varepsilon = 0.25$.
The dotted line shows the asymptotic approximation
(\ref{asymptotic-Wannier-function}).} \label{fig-roots}
\end{figure}

We use the set of Wannier functions $\{ \hat{u}_{l,n} \}_{n \in
\mathbb{Z}}$ for a fixed $l \in \mathbb{N}$ to represent formally
solutions of the GP equation (\ref{GP}) in the form
\begin{equation}
\phi(x,t) = \mu^{1/2} \left( \varphi_0(x,T) + \mu \varphi(x,t)
\right) e^{- i \hat{\omega}_{l,0} t}, \quad T = \mu t, \quad  \mu
= \varepsilon e^{-\frac{a}{\varepsilon}},
\end{equation}
where
\begin{equation}
\varphi_0(x,T) = \sum_{n \in \mathbb{Z}} \phi_n(T)
\hat{u}_{l,n}(x).
\end{equation}
Let us assume that the sequence $\{ \phi_n \}_{n \in \mathbb{Z}}$
satisfies the DNLS equation (\ref{DNLS}) with $\alpha =
\frac{\hat{\omega}_{l,1}}{\mu}$ and $\beta = \| \hat{u}_{l,0}
\|^4_{L^4(\mathbb{R})}$. Since the Wannier functions satisfy the
ODE system from \cite{PelSchn}
\begin{equation}
- \hat{u}_{l,n}''(x) + V(x) \hat{u}_{l,n}(x) = \sum_{n' \in
\mathbb{Z}} \hat{\omega}_{l,n-n'} \hat{u}_{l,n'}(x), \qquad \forall
n \in \mathbb{Z}, \label{system-Wannier-Fourier}
\end{equation}
we obtain an inhomogeneous PDE system for the function
$\varphi(x,t)$ in the form
\begin{eqnarray}
\nonumber i \varphi_t & = & - \varphi_{xx} + V(x) \varphi -
\hat{\omega}_{l,0} \varphi + \frac{1}{\mu} \sum_{n \in \mathbb{Z}}
\sum_{m \geq 2} \hat{\omega}_{l,m} \left( \phi_{n+m} + \phi_{n-m}
\right) \hat{u}_{l,n} \\ & \phantom{t} &
\phantom{texttexttexttexttexttext} + \sigma \left( | \varphi_0 + \mu
\varphi|^2 \left( \varphi_0 + \mu \varphi \right) - \beta \sum_{n
\in \mathbb{Z}} |\phi_n |^2 \phi_n \hat{u}_{l,n} \right).
\label{evolution-time}
\end{eqnarray}
The term $|\varphi_0|^2 \varphi_0$ gives projections both to the
selected $l$-th spectral band and to its complement in
$L^2(\mathbb{R})$. The following two lemmas allow us to control
both projections.

\begin{lemma}
\label{proposition-Wannier} Let $E_l$ be the invariant closed
subspace of $L^2(\mathbb{R})$ associated with the $l$-th spectral
band and assume that $E_l \cap E_m = \emptyset$ for a fixed $l \in
\mathbb{N}$ and all $m \neq l$. Then, $\langle \hat{u}_{n,l},
\hat{u}_{n',l} \rangle = \delta_{n,n'}$ for any $n,n' \in
\mathbb{Z}$ and there exists constants $\eta_l > 0$ and $C_l
> 0$, such that
\begin{equation}
\label{decay-Wannier} |\hat{u}_{l,n}(x)| \leq C_l e^{-\eta_l |x - 2
\pi n |}, \quad \forall n \in \mathbb{Z}, \;\; \forall x \in
\mathbb{R}.
\end{equation}
Moreover, if $\vec{\mbox{\boldmath $\phi$}} \in l^1(\mathbb{Z})$,
$\hat{u}_{l,n} \in {\cal H}^1(\mathbb{R})$, and $\phi(x) = \sum_{n
\in \mathbb{Z}} \phi_n \hat{u}_{l,n}(x)$ for a fixed $l \in
\mathbb{N}$, then $\phi \in E_l$, $(\phi, \psi) = 0$, $\forall \psi
\in  \cup_{m \neq l} E_m$, and $\phi \in {\cal H}^1(\mathbb{R})$.
\end{lemma}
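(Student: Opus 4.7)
The plan is to handle the three claims in sequence, all flowing from the Fourier inversion
$$\hat{u}_{l,n}(x) = \int_\mathbb{T} u_l(x;k)\,e^{-i 2\pi n k}\,dk$$
of (\ref{band-function}) and the orthogonality (\ref{orthogonality}) of the Bloch functions across bands. Since the $\hat{u}_{l,n}$ are real, I also have $\hat{u}_{l,n}(x) = \int_\mathbb{T} \bar u_l(x;k)\,e^{i 2\pi n k}\,dk$. For the orthonormality claim I would insert both representations into $\int_\mathbb{R}\hat{u}_{l,n}\hat{u}_{l,n'}\,dx$, exchange the order of integration, and use (\ref{orthogonality}) with $l'=l$ to collapse the $x$- and one $k$-integral into a Dirac mass, leaving $\int_\mathbb{T} e^{i 2\pi(n-n')k}\,dk = \delta_{n,n'}$.

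The exponential decay bound (\ref{decay-Wannier}) is the step I expect to be the main obstacle. The band separation assumption $E_l\cap E_m=\emptyset$ for $m\neq l$, together with Proposition \ref{proposition-bands}(i) and the exponential smallness (\ref{potential-3}) of $\hat{\omega}_{l,n}$ (which confines $\omega_l(\mathbb{T})$ to an exponentially narrow interval around $\hat{\omega}_{l,0}$), implies that $\omega_l(\mathbb{T})$ is isolated from the rest of $\sigma(L)$ by a spectral gap uniform in $k\in\mathbb{T}$. I would use this gap to extend the spectral projector onto the $l$-th fibre analytically in $k$ into a complex strip $|\mathrm{Im}\,k|<\eta_l/(2\pi)$ via a resolvent integral. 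Combined with the phase convention $u_l(x;-k)=\bar u_l(x;k)$, which in one dimension removes the monodromy obstruction and allows a globally analytic choice of $u_l(\cdot;k)$ on $\mathbb{T}$, this produces a $k$-analytic Bloch frame with values bounded uniformly on the strip after the standard pointwise estimate on a single cell. Shifting the contour of integration in $k$ then yields $|\hat{u}_{l,0}(x)|\leq C_l\,e^{-\eta_l|x|}$ for all $x\in\mathbb{R}$, and the shift identity (\ref{Wannier-shifts}) gives the general bound.

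For the series $\phi(x)=\sum_{n\in\mathbb{Z}} \phi_n \hat{u}_{l,n}(x)$ with $\vec{\mbox{\boldmath $\phi$}}\in l^1(\mathbb{Z})$, the shift property (\ref{Wannier-shifts}) together with $2\pi$-periodicity of $V$ gives $\|\hat{u}_{l,n}\|_{{\cal H}^1(\mathbb{R})}=\|\hat{u}_{l,0}\|_{{\cal H}^1(\mathbb{R})}$ for every $n$, so the triangle inequality provides
$$\|\phi\|_{{\cal H}^1(\mathbb{R})}\leq \|\hat{u}_{l,0}\|_{{\cal H}^1(\mathbb{R})}\,\|\vec{\mbox{\boldmath $\phi$}}\|_{l^1(\mathbb{Z})},$$
which proves ${\cal H}^1$-convergence and places $\phi$ in ${\cal H}^1(\mathbb{R})$. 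Each $\hat{u}_{l,n}$ lies in $E_l$ as a $k$-average of Bloch functions belonging to the $l$-th fibre, so closedness of $E_l$ in $L^2(\mathbb{R})$ puts $\phi\in E_l$. Finally, for $\psi\in E_m$ with $m\neq l$, (\ref{orthogonality}) gives $\langle\hat{u}_{l,n},\psi\rangle=0$ for every $n$, and interchanging the inner product with the absolutely convergent sum yields $\langle\phi,\psi\rangle=0$.

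Apart from the analyticity argument in the decay step, which requires some care about the $k$-dependence of the spectral projector and the phase of the Bloch frame on the complex strip, the remaining pieces amount to bookkeeping with Fourier series, the triangle inequality, and the closedness of $E_l$.
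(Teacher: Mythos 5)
Your proposal is correct and follows essentially the same route as the paper, whose own proof is only a sketch: it derives orthonormality from the Bloch orthogonality relations, obtains the exponential decay by ``complex integration'' (deferring the details to \cite{PelSchn}, which are exactly the analytic-continuation-and-contour-shift argument you spell out), and handles the last assertion via $L^2$ spectral theory and the triangle inequality. You have simply filled in the standard details that the paper outsources.
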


\begin{proof}
The orthogonality and exponential decay of Wannier functions
follows from the orthogonality relations (\ref{orthogonality}) and
complex integration (see \cite{PelSchn} for the proof). The
assertion that $\phi \in E_l$ and $(\phi, \psi) = 0$, $\forall
\psi \in  \cup_{m \neq l} E_m$ follows from the $L^2$ spectral
theory for the  operator $L = -\partial_x^2 + V(x)$ (if
$\vec{\mbox{\boldmath $\phi$}} \in l^1(\mathbb{Z})$, then
$\vec{\mbox{\boldmath $\phi$}} \in l^2(\mathbb{Z})$ and $\phi \in
L^2(\mathbb{R})$). The assertion that $\phi \in {\cal
H}^1(\mathbb{R})$ follows from the triangular inequality.
\end{proof}

\begin{remark}
{\rm According to property (i) of Proposition
\ref{proposition-bands}, the $l$-th spectral band for a fixed $l
\in \mathbb{N}$ becomes disjoint from the rest of the spectrum of
$L$ in the asymptotic limit $\varepsilon \to 0$.  According to
property (i) of Proposition \ref{proposition-Wannier-limit},
$\hat{u}_{l,n} \in {\cal H}^1(\mathbb{R})$ for all $n \in
\mathbb{N}$ uniformly in $\varepsilon \geq 0$. Therefore, the
assumptions of Lemma \ref{proposition-Wannier} are satisfied for
sufficiently small $\varepsilon \geq 0$.}
\end{remark}

\begin{lemma}
\label{lemma-LS-reductions-estimate} Let $\Pi$ be an orthogonal
projection from $L^2(\mathbb{R})$ to $E_l \subset
L^2(\mathbb{R})$. There exists a unique solution $\varphi \in
{\cal H}^1(\mathbb{R})$ of the inhomogeneous equation
\begin{equation}
\label{inhomogeneous-equation} \left( -\partial^2_x + V(x) -
\hat{\omega}_{l,0} \right) \varphi = \left( {\cal I} - \Pi \right)
f,
\end{equation}
for any $f \in L^2(\mathbb{R})$, uniformly in $\varepsilon \geq
0$, such that $(\varphi,\psi) = 0$, $\forall \psi \in E_l$.
\end{lemma}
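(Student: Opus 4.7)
The plan is to exploit the orthogonal decomposition $L^2(\mathbb{R}) = E_l \oplus E_l^{\perp}$ induced by the spectral theorem for the self-adjoint operator $L = -\partial_x^2 + V(x)$. By construction $(\mathcal{I} - \Pi) f \in E_l^{\perp}$, and since $L - \hat{\omega}_{l,0}$ leaves each of $E_l$ and $E_l^{\perp}$ invariant, I would look for the solution $\varphi$ in $E_l^{\perp}$; any such $\varphi$ then automatically satisfies the required orthogonality condition $(\varphi, \psi) = 0$ for every $\psi \in E_l$.

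First I would invoke property (i) of Proposition \ref{proposition-bands}: the spectrum of the restriction $L|_{E_l^{\perp}}$ is separated from $\hat{\omega}_{l,0}$ by at least $\zeta_0 > 0$, uniformly in $\varepsilon \in [0,\varepsilon_0)$. The spectral theorem then implies that $(L - \hat{\omega}_{l,0})|_{E_l^{\perp}}$ is boundedly invertible with operator norm at most $\zeta_0^{-1}$. Applied to $(\mathcal{I} - \Pi) f$ this produces a unique $\varphi \in E_l^{\perp} \subset L^2(\mathbb{R})$ that solves (\ref{inhomogeneous-equation}) and satisfies the $L^2$ estimate $\| \varphi \|_{L^2} \leq \zeta_0^{-1} \| f \|_{L^2}$.

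Next I would upgrade this $L^2$-bound to an $\mathcal{H}^1$-bound by testing the equation against $\varphi$ itself. Integration by parts (which is legitimate since $\varphi$ lies in the form domain of $L$) gives the identity $\int_{\mathbb{R}} (|\varphi'|^2 + V|\varphi|^2)\,dx = \hat{\omega}_{l,0} \| \varphi \|_{L^2}^2 + \langle (\mathcal{I} - \Pi) f, \varphi \rangle$, which, rewritten through the definition (\ref{norm-space}), becomes $\| \varphi \|_{\mathcal{H}^1}^2 = (1 + \hat{\omega}_{l,0}) \| \varphi \|_{L^2}^2 + \langle (\mathcal{I} - \Pi) f, \varphi \rangle$. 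Estimating the last term by Cauchy--Schwarz, feeding in the $L^2$-bound above, and using property (ii) of Proposition \ref{proposition-bands} to bound $|\hat{\omega}_{l,0}| \leq \omega_0$ yields $\| \varphi \|_{\mathcal{H}^1} \leq C \| f \|_{L^2}$ for a constant $C$ depending only on $\zeta_0$ and $\omega_0$.

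No genuine obstacle is expected: the argument is essentially spectral-theoretic once the decomposition $L^2 = E_l \oplus E_l^{\perp}$ is in place. The only point that must be carefully monitored is the uniformity of all constants in $\varepsilon$, and this is guaranteed directly by the $\varepsilon$-independent band-gap $\zeta_0$ and upper bound $\omega_0$ from Proposition \ref{proposition-bands}, together with the fact that the quadratic form behind $\| \cdot \|_{\mathcal{H}^1}$ is exactly the one associated with $L + 1$, so no additional $\varepsilon$-dependent constants are introduced when converting from the operator bound to the energy-norm bound.
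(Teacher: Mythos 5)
Your proposal is correct and follows essentially the same route as the paper: the $L^2$ bound on $\varphi$ via the band separation of Proposition \ref{proposition-bands}(i), the energy identity obtained by testing the equation against $\varphi$ combined with Proposition \ref{proposition-bands}(ii) to pass to the ${\cal H}^1$ norm, and uniqueness from invertibility of $L - \hat{\omega}_{l,0}$ on the orthogonal complement of $E_l$. Your write-up is somewhat more explicit about the spectral decomposition and the constant $\zeta_0^{-1}$, but the substance is identical.
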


\begin{proof}
By property (i) of Proposition \ref{proposition-bands}, if $f \in
L^2(\mathbb{R})$, then $\varphi \in L^2(\mathbb{R})$ uniformly in
$\varepsilon \geq 0$. By property (ii) of Proposition
\ref{proposition-bands}, we obtain
\begin{equation}
\label{equation-Y} \| \varphi' \|^2_{L^2(\mathbb{R})} + \| V^{1/2}
\varphi \|^2_{L^2(\mathbb{R})} \leq |\hat{\omega}_{l,0}| \|
\varphi \|^2_{L^2(\mathbb{R})} + |(\varphi,f)| \leq C \| f
\|^2_{L^2(\mathbb{R})},
\end{equation}
where the constant $C > 0$ is $\varepsilon$-independent.
Therefore, if $f \in L^2(\mathbb{R})$, then $\varphi \in {\cal
H}^1(\mathbb{R})$ uniformly in $\varepsilon \geq 0$. Uniqueness of
$\varphi$ follows from the fact that the operator $L -
\hat{\omega}_{l,0}$ is invertible in $L^2(\mathbb{R}) \backslash
E_l$.
\end{proof}

We can also use the following elementary result.
\begin{lemma}
\label{lemma-Banach-algebra} The space ${\cal H}^1(\mathbb{R})$
forms Banach algebra under the pointwise multiplication, such that
\begin{equation}
\forall u,v \in {\cal H}^1(\mathbb{R}) : \qquad \| u v \|_{{\cal
H}^1(\mathbb{R})} \leq C \| u \|_{{\cal H}^1(\mathbb{R})} \| v
\|_{{\cal H}^1(\mathbb{R})},
\end{equation}
for some $C > 0$.
\end{lemma}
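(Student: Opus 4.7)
The plan is to reduce the claim to the well-known fact that $H^1(\mathbb{R})$ is a Banach algebra, using the one-dimensional Sobolev embedding $H^1(\mathbb{R}) \hookrightarrow L^\infty(\mathbb{R})$ and the already-noted inequality $\|\phi\|_{H^1(\mathbb{R})} \leq \|\phi\|_{\mathcal{H}^1(\mathbb{R})}$. The main point to verify carefully is that the constant $C$ does not depend on $\varepsilon$, even though $V$ takes the value $\varepsilon^{-2}$ on part of each period.

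First I would expand
\[
\|uv\|_{\mathcal{H}^1(\mathbb{R})}^2 = \int_{\mathbb{R}} \bigl( |(uv)'|^2 + V(x)|uv|^2 + |uv|^2 \bigr)\,dx,
\]
use the product rule $(uv)' = u'v + uv'$, and the trivial bound $|(uv)'|^2 \leq 2|u'|^2|v|^2 + 2|u|^2|v'|^2$. In each of the three integrands I would pull one factor out in $L^\infty$. For instance, $\int |u'v|^2\,dx \leq \|v\|_{L^\infty}^2 \|u'\|_{L^2}^2$, and analogously
\[
\int V(x)|u|^2|v|^2\,dx \leq \|v\|_{L^\infty}^2 \int V(x)|u|^2\,dx, \qquad \int |u|^2|v|^2\,dx \leq \|v\|_{L^\infty}^2 \|u\|_{L^2}^2.
\]
Summing these three contributions, and adding a symmetric term from $|uv'|^2$, yields
\[
\|uv\|_{\mathcal{H}^1(\mathbb{R})}^2 \leq 2\|v\|_{L^\infty}^2 \|u\|_{\mathcal{H}^1(\mathbb{R})}^2 + 2\|u\|_{L^\infty}^2 \|v\|_{\mathcal{H}^1(\mathbb{R})}^2 .
\]

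Next I would invoke the standard one-dimensional Sobolev embedding, $\|u\|_{L^\infty(\mathbb{R})} \leq C_S \|u\|_{H^1(\mathbb{R})}$, with an $\varepsilon$-independent constant $C_S$. Combining this with the inequality $\|u\|_{H^1(\mathbb{R})} \leq \|u\|_{\mathcal{H}^1(\mathbb{R})}$ noted right after (\ref{norm-space}), both $L^\infty$ norms above are controlled by the corresponding $\mathcal{H}^1$ norms, which gives
\[
\|uv\|_{\mathcal{H}^1(\mathbb{R})}^2 \leq 4 C_S^2 \, \|u\|_{\mathcal{H}^1(\mathbb{R})}^2 \|v\|_{\mathcal{H}^1(\mathbb{R})}^2,
\]
so $C = 2 C_S$ works.

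There is no real obstacle here; the only subtlety is keeping track of $\varepsilon$-independence. This is why it is essential to pull out $L^\infty$ norms rather than, say, trying to bound $V|uv|^2$ by $\|V\|_{L^\infty} |u|^2 |v|^2$, which would produce a constant of order $\varepsilon^{-2}$. By keeping $V$ attached to one factor and bounding the other factor in $L^\infty$, the large size of $V$ is absorbed into the $\mathcal{H}^1$ norm of that factor, producing an $\varepsilon$-independent constant as required.
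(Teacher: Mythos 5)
Your proof is correct and follows essentially the same route as the paper's (very terse) proof, which simply invokes the decomposition $\| u \|^2_{{\cal H}^1(\mathbb{R})} = \| u \|^2_{H^1(\mathbb{R})} + \| V^{1/2} u \|^2_{L^2(\mathbb{R})}$ together with the Sobolev embedding $\| u \|_{L^{\infty}(\mathbb{R})} \leq C \| u \|_{H^1(\mathbb{R})}$. You have merely written out the details, and your remark about keeping $V$ attached to one factor so that the constant stays $\varepsilon$-independent is exactly the point the paper's one-line argument is relying on.
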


\begin{proof}
The result follows from the representation $\| u \|^2_{{\cal
H}^1(\mathbb{R})} = \| u \|^2_{H^1(\mathbb{R})} + \| V^{1/2} u
\|^2_{L^2(\mathbb{R})}$ and the Sobolev embedding theorem $\| u
\|_{L^{\infty}(\mathbb{R})} \leq C \| u \|_{H^1(\mathbb{R})}$ for
some $C > 0$.
\end{proof}

Let us return back to the evolution problem (\ref{evolution-time})
and decompose the solution $\varphi(x,t)$ into two parts
$\varphi(x,t) = \varphi_1(x,T) + \psi(x,t)$, where $\varphi_1$ is
a solution of the inhomogeneous equation
(\ref{inhomogeneous-equation}) with $f = - \sigma
|\varphi_0(x,T)|^2 \varphi_0(x,T)$, while $\psi$ satisfies the
evolution problem in the abstract form
\begin{eqnarray}
\label{evolution-psi} i \psi_t = \left( L - \hat{\omega}_{l,0}
\right) \psi + \mu R(\mbox{\boldmath $\vec{\phi}$}) + \mu \sigma
N(\mbox{\boldmath $\vec{\phi}$},\psi),
\end{eqnarray}
where
\begin{eqnarray}
R(\mbox{\boldmath $\vec{\phi}$}) = \frac{1}{\mu^2} \sum_{n \in
\mathbb{Z}} \sum_{m \geq 2} \hat{\omega}_{l,m} \left( \phi_{n+m} +
\phi_{n-m} \right) \hat{u}_{l,n} + \frac{\sigma}{\mu} \left( \Pi
|\varphi_0|^2 \varphi_0 - \beta \sum_{n \in \mathbb{Z}} |\phi_n |^2
\phi_n \hat{u}_{l,n} \right)
\end{eqnarray}
and
\begin{eqnarray}
\nonumber N(\mbox{\boldmath $\vec{\phi}$},\psi) & = & - i \sigma
\partial_T \varphi_1 + 2 |\varphi_0 |^2 (\varphi_1 + \psi) +
\varphi_0^2 (\bar{\varphi}_1 + \bar{\psi}) \\ & \phantom{t} & +
\mu \left( 2 |\varphi_1 + \psi|^2 \varphi_0 + (\varphi_1 + \psi)^2
\bar{\varphi}_0 \right) + \mu^2 |\varphi_1 + \psi|^2 (\varphi_1 +
\psi),
\end{eqnarray}
with $\varphi_0 = \sum_{n \in \mathbb{N}} \phi_n \hat{u}_{l,n}$,
$\varphi_1 = -\sigma ({\cal I} - \Pi) (L -
\hat{\omega}_{l,0})^{-1} ({\cal I} - \Pi) |\varphi_0|^2
\varphi_0$, and $\sigma = \pm 1$. The following lemma gives a
bound on the vector field of the evolution problem
(\ref{evolution-psi}).

\begin{lemma}
\label{lemma-reductions} Let $D_{\delta_1} \subset l^1(\mathbb{Z})$
be a ball of finite radius $\delta_1$ centered at $0 \in
l^1(\mathbb{Z})$, $D_{\delta_2} \subset {\cal H}^1(\mathbb{R})$ be a
ball of finite radius $\delta_2$ centered at $0 \in {\cal
H}^1(\mathbb{R})$ and $R_{\mu_0} \subset \mathbb{R}$ be an interval
of small radius $\mu_0$ centered at $0 \in \mathbb{R}$. Then, for
any $\mu \in (0,\mu_0)$, $\| \mbox{\boldmath $\vec{\phi}$}
\|_{l^1(\mathbb{Z})} \in [0,\delta_1)$ and $\| \psi \|_{{\cal
H}^1(\mathbb{R})} \in [0,\delta_2)$, there exists $\mu$-independent
constants $C_R,C_N > 0$ such that
\begin{equation}
\label{bound-map} \| R(\mbox{\boldmath $\vec{\phi}$}) \|_{{\cal
H}^1(\mathbb{R})} \leq C_R \| \vec{\mbox{\boldmath $\phi$}}
\|_{l^1(\mathbb{Z})}, \quad \| N(\vec{\mbox{\boldmath $\phi$}},\psi)
\|_{{\cal H}^1(\mathbb{R})} \leq C_N \left( \| \vec{\mbox{\boldmath
$\phi$}} \|_{l^1(\mathbb{Z})} + \| \psi \|_{{\cal H}^1(\mathbb{R})}
\right).
\end{equation}
\end{lemma}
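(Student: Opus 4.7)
The plan is to bound the two contributions to $R$ separately, then use the Banach algebra structure of $\mathcal{H}^1(\mathbb{R})$ from Lemma~\ref{lemma-Banach-algebra} to control the polynomial terms in $N$. Throughout, I will exploit the uniform estimate $\|\hat{u}_{l,n}\|_{\mathcal{H}^1} \le U_0$ from Proposition~\ref{proposition-Wannier-limit}(i), combined with the triangle inequality, to convert $l^1$-smallness of sequences $\{\phi_n\}$ into $\mathcal{H}^1$-smallness of superpositions $\sum_n \phi_n \hat{u}_{l,n}$. The linear remainder term in $R$ is the easy half: property (iii) of Proposition~\ref{proposition-bands} yields $|\hat{\omega}_{l,m}|/\mu^2 \le c_m \mu^{m-2}$ for $m \ge 2$, which precisely cancels the $1/\mu^2$ prefactor and leaves a convergent series in $\mu$. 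Taking $\mathcal{H}^1$-norms inside the double sum and factoring out $U_0$, I expect a clean $C\|\vec{\mbox{\boldmath $\phi$}}\|_{l^1}$ bound for all $\mu \in (0,\mu_0)$ with $\mu_0$ sufficiently small.

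For the cubic piece of $R$, I plan to expand $\Pi(|\varphi_0|^2\varphi_0) = \sum_n c_n \hat{u}_{l,n}$ with
\[
c_n = \sum_{m_1,m_2,m_3}\bar\phi_{m_1}\phi_{m_2}\phi_{m_3}\,J(m_1-n,m_2-n,m_3-n),
\]
where the shift invariance (\ref{Wannier-shifts}) lets me rewrite each overlap integral as $J(k_1,k_2,k_3)=\int_{\mathbb{R}} \hat{u}_{l,0}(x)\hat{u}_{l,0}(x-2\pi k_1)\hat{u}_{l,0}(x-2\pi k_2)\hat{u}_{l,0}(x-2\pi k_3)\,dx$. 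The diagonal term $J(0,0,0) = \|\hat{u}_{l,0}\|_{L^4}^4 = \beta$ matches the subtracted sum exactly, and for any triple $(k_1,k_2,k_3)\ne(0,0,0)$ I would split $\mathbb{R}$ into the cells $[2\pi m,2\pi(m+1)]$ and apply the pointwise exponential decay~(\ref{decay-Wannier-strong}) to prove $|J(k_1,k_2,k_3)| \le C_\ast \mu^{s}$ with $s\ge 1$ growing in $\max_i |k_i|$. Combined with the crude bound $\sum_n|\phi_{n+k_1}\phi_{n+k_2}\phi_{n+k_3}| \le \|\vec{\mbox{\boldmath $\phi$}}\|_{l^1}^3$ and the convergence of the resulting series in $(k_1,k_2,k_3)$, the prefactor $1/\mu$ is absorbed and the cubic piece is bounded by $C\|\vec{\mbox{\boldmath $\phi$}}\|_{l^1}^3 \le C\delta_1^2\|\vec{\mbox{\boldmath $\phi$}}\|_{l^1}$ on $D_{\delta_1}$. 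This overlap-integral bookkeeping is the step I expect to be the main obstacle.

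For $N$, I first note $\|\varphi_0\|_{\mathcal{H}^1}\le U_0\|\vec{\mbox{\boldmath $\phi$}}\|_{l^1}$ directly, and $\|\varphi_1\|_{\mathcal{H}^1}\le C\|\vec{\mbox{\boldmath $\phi$}}\|_{l^1}^3$ by combining Lemma~\ref{lemma-LS-reductions-estimate} (applied with $f=-\sigma|\varphi_0|^2\varphi_0$) with the Banach algebra Lemma~\ref{lemma-Banach-algebra} applied to $|\varphi_0|^2\varphi_0$. To control $\partial_T\varphi_1$, I differentiate its defining equation in $T$ and use $\partial_T\varphi_0 = \sum_n\dot\phi_n\hat{u}_{l,n}$ together with $\|\dot{\vec{\mbox{\boldmath $\phi$}}}\|_{l^1}\le C\|\vec{\mbox{\boldmath $\phi$}}\|_{l^1}$ on $D_{\delta_1}$; the latter follows from the DNLS equation~(\ref{DNLS}), since $\alpha=\hat{\omega}_{l,1}/\mu$ is $\mu$-uniformly bounded by Proposition~\ref{proposition-bands}(iii). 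Reapplying Lemmas~\ref{lemma-LS-reductions-estimate} and~\ref{lemma-Banach-algebra} then yields an $\mathcal{H}^1$-bound on $\partial_T\varphi_1$ polynomial in $\|\vec{\mbox{\boldmath $\phi$}}\|_{l^1}$. Every remaining monomial in $N$ is controlled by Lemma~\ref{lemma-Banach-algebra}: the terms $2|\varphi_0|^2\psi+\varphi_0^2\bar\psi$ produce the linear-in-$\|\psi\|_{\mathcal{H}^1}$ contribution with small prefactor $\sim\delta_1^2$, while every other term yields higher powers of $\|\vec{\mbox{\boldmath $\phi$}}\|_{l^1}$ or $\|\psi\|_{\mathcal{H}^1}$, or an extra factor of $\mu$, all of which are absorbed into $C_N(\|\vec{\mbox{\boldmath $\phi$}}\|_{l^1}+\|\psi\|_{\mathcal{H}^1})$ on the balls $D_{\delta_1}$, $D_{\delta_2}$, and $R_{\mu_0}$.
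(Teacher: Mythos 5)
Your proposal follows essentially the same route as the paper's proof: expand $R$ in the Wannier basis, use the tight-binding decay of $\hat{\omega}_{l,m}$ (Proposition \ref{proposition-bands}(iii)) and of the Wannier overlap integrals (Proposition \ref{proposition-Wannier-limit}(iii)) to absorb the $1/\mu^2$ and $1/\mu$ prefactors after the diagonal cubic term cancels against $\beta\sum_n|\phi_n|^2\phi_n\hat u_{l,n}$, and control $N$ via Lemma \ref{lemma-LS-reductions-estimate} and the Banach algebra property of ${\cal H}^1(\mathbb{R})$. The only cosmetic difference is your use of shift invariance to index the overlap integrals by differences rather than by the full quadruple $(n,n_1,n_2,n_3)$; the argument is otherwise the paper's.
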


\begin{proof}
By the last assertion of Lemma \ref{proposition-Wannier}, if
$\vec{\mbox{\boldmath $\phi$}} \in l^1(\mathbb{Z})$ and $\phi(x) =
\sum_{n \in \mathbb{Z}} \phi_n \hat{u}_{l,n}(x)$ for a fixed $l
\in \mathbb{N}$, then $\phi \in {\cal H}^1(\mathbb{R})$.
Therefore, there exists $C_0 > 0$ such that $\| \varphi_0
\|_{{\cal H}^1(\mathbb{R})} \leq C_0  \| \mbox{\boldmath
$\vec{\phi}$} \|_{l^1(\mathbb{Z})}$. By Lemmas
\ref{lemma-LS-reductions-estimate} and \ref{lemma-Banach-algebra},
there exists $C_1 > 0$ such that $\| \varphi_1 \|_{{\cal
H}^1(\mathbb{R})} \leq C_1  \| \mbox{\boldmath $\vec{\phi}$}
\|^3_{l^1(\mathbb{Z})}$, where we have used the fact that $\|
|\varphi_0|^2 \varphi_0 \|_{{\cal H}^1(\mathbb{R})} \leq C^3 \|
\varphi_0 \|^3_{{\cal H}^1(\mathbb{R})}$ for some $C > 0$. The
vector field $R(\mbox{\boldmath $\vec{\phi}$})$ can be represented
by $R(\mbox{\boldmath $\vec{\phi}$}) = \sum_{n \in \mathbb{Z}}
r_n(\mbox{\boldmath $\vec{\phi}$}) \hat{u}_{l,n}(x)$, where
$$
r_n(\mbox{\boldmath $\vec{\phi}$}) = \frac{1}{\mu^2} \sum_{m \geq 2}
\hat{\omega}_{l,m} \left( \phi_{n+m} + \phi_{n-m} \right) +
\frac{\sigma}{\mu} \sum_{(n_1,n_2,n_3) \in \mathbb{Z}^3 \backslash
\{(n,n,n)\}} K_{n,n_1,n_2,n_3} \phi_{n_1} \bar{\phi}_{n_2}
\phi_{n_3},
$$
where $K_{n,n_1,n_2,n_3} = (\hat{u}_{l,n}, \hat{u}_{l,n_1}
\hat{u}_{l,n_2} \hat{u}_{l,n_3})$. The first bound in
(\ref{bound-map}) is proved if $\vec{\bf r} \in l^1(\mathbb{Z})$
for every $\mbox{\boldmath $\vec{\phi}$} \in l^1(\mathbb{Z})$ and
the map $\vec{\bf r}(\mbox{\boldmath $\vec{\phi}$})$ is uniformly
bounded for small $\mu > 0$. The first term in $\vec{\bf
r}(\mbox{\boldmath $\vec{\phi}$})$ is estimated as follows
$$
\left\| \sum_{m \geq 2} \hat{\omega}_{l,m} \left( \phi_{n+m} +
\phi_{n-m} \right) \right\|_{l^1(\mathbb{Z})} \leq \sum_{n \in
\mathbb{Z}} \sum_{m \in \mathbb{Z} \backslash \{0,1,-1\}}
|\hat{\omega}_{l,m+n}| |\phi_n| \leq K_1 \| \vec{\mbox{\boldmath
$\phi$}} \|_{l^1(\mathbb{Z})},
$$
where $K_1 = {\rm sup}_{n \in \mathbb{Z}\backslash \{0,1,-1\}}
\sum_{m \in \mathbb{Z}} |\hat{\omega}_{l,n+m} |$. Since
$\omega_{l}(k)$ is analytically extended along the Riemann surface
on $k \in \mathbb{T}$ (by Theorem XIII.95 on p.301 in \cite{RS}), we
have $\omega_{l} \in H^s(\mathbb{T})$ for any $s \geq 0$, such that
$K_1 < \infty$. The second term in $\vec{\bf r}(\mbox{\boldmath
$\vec{\phi}$})$ is estimated as follows
\begin{eqnarray*}
\left\| \sum_{(n_1,n_2,n_3) \in \mathbb{Z}^3 \backslash \{(n,n,n)\}}
K_{n,n_1,n_2,n_3} \phi_{n_1} \bar{\phi}_{n_2} \phi_{n_3}
\right\|_{l^1(\mathbb{Z})} & \leq & \sum_{n \in \mathbb{Z}}
\sum_{(n_1,n_2,n_3) \in \mathbb{Z}^3 \backslash \{(n,n,n)\}}
|K_{n,n_1,n_2,n_3}| |\phi_{n_1}| |\phi_{n_2}| |\phi_{n_3}| \\ & \leq
& K_2 \| \vec{\mbox{\boldmath $\phi$}} \|^3_{l^1(\mathbb{Z})},
\end{eqnarray*}
where $K_2 = {\rm sup}_{(n_1,n_2,n_3) \in \mathbb{Z} \backslash
\{(n,n,n)\}} \sum_{n \in \mathbb{Z}} |K_{n,n_1,n_2,n_3}|$. Using the
exponential decay (\ref{decay-Wannier}), we obtain
$$
\sum_{n \in \mathbb{Z}} |\hat{u}_{l,n}(x)|  \leq C_{l} \sum_{n \in
\mathbb{Z}} e^{-\eta_{l} | x - 2 \pi n |} \leq A_l
$$
for some $A_l > 0$ uniformly in $x \in \mathbb{R}$ and
\begin{eqnarray*}
\sum_{n \in \mathbb{Z}} |K_{n,n_1,n_2,n_3}| \leq A_l
\int_{\mathbb{R}} |\hat{u}_{l,n_1}(x)| |\hat{u}_{l,n_2}(x)|
|\hat{u}_{l,n_3}(x)| dx \leq A_l \|
\hat{u}_{l,0}\|^2_{L^4(\mathbb{R})} \|
\hat{u}_{l,0}\|_{L^2(\mathbb{R})}
\end{eqnarray*}
uniformly in $(n_1,n_2,n_3) \in \mathbb{Z}^3$. By the Sobolev
embedding theorem and property (i) of Proposition
\ref{proposition-Wannier-limit}, $\| \hat{u}_{l,0}
\|_{L^4(\mathbb{R})} \leq C \| \hat{u}_{l,0} \|_{H^1(\mathbb{R})}
\leq C \| \hat{u}_{l,0} \|_{{\cal H}^1(\mathbb{R})}$ for some $C >
0$, such that $K_2 < \infty$. Therefore, the norm $\|
R(\mbox{\boldmath $\vec{\phi}$}) \|_{{\cal H}^1(\mathbb{R})}$ is
bounded from above by the norm $\| \vec{\mbox{\boldmath $\phi$}}
\|_{l^1(\mathbb{Z})}$.

To show that the constant $C_R$ is uniform for small $\mu > 0$, we
use Propositions \ref{proposition-bands} and
\ref{proposition-Wannier-limit}. By property (iii) of Proposition
\ref{proposition-bands}, $\hat{\omega}_{l,m} = {\rm O}(\mu^m)$ for
all $m \geq 2$, such that $K_1/\mu^2$ is uniformly bounded for
small $\mu$. By property (iii) of Proposition
\ref{proposition-Wannier-limit}, $K_{n,n_1,n_2,n_3} = {\rm
O}\left(\mu^{|n_1-n| + |n_2-n| + |n_3-n| + |n_2 - n_1| + |n_3 -
n_1| + |n_3 - n_2|}\right)$ for all $n_1,n_2,n_3 \in \mathbb{Z}^3
\backslash \{(n,n,n) \}$, such that $K_2/\mu$ is uniformly bounded
for small $\mu$. Thus, the first bound in (\ref{bound-map}) is
proved.

The second bound in (\ref{bound-map}) follows from the fact that
both ${\cal H}^1(\mathbb{R})$ and $l^1(\mathbb{Z})$ form Banach
algebras with respect to pointwise multiplication. As a result, if
$\vec{\mbox{\boldmath $\phi$}} \in l^1(\mathbb{Z})$ and
$\vec{\mbox{\boldmath $\phi$}}(T)$ is a solution of the DNLS
equation (\ref{DNLS}), then $\partial_T \vec{\mbox{\boldmath
$\phi$}} \in l^1(\mathbb{Z})$ and if $\varphi_0, \varphi_1 \in
{\cal H}^1(\mathbb{R})$ and $\vec{\mbox{\boldmath $\phi$}} \in
l^1(\mathbb{Z})$, then $N(\vec{\mbox{\boldmath $\phi$}},\psi)$
maps $\psi \in {\cal H}^1(\mathbb{R})$ to an element of ${\cal
H}^1(\mathbb{R})$.
\end{proof}

We can now prove that the initial-value problem for the
time-evolution equation (\ref{evolution-psi}) and the
initial-value problem for the DNLS equation (\ref{DNLS}) are
locally well-posed.

\begin{theorem}
\label{corollary-wellposedness} Let $\vec{\mbox{\boldmath
$\phi$}}(T) \in C^1(\mathbb{R},l^1(\mathbb{Z}))$ and $\psi_0 \in
{\cal H}^1(\mathbb{R})$. Then, there exists a $t_0 > 0$ and a
unique solution $\psi(t) \in C^1([0,t_0],{\cal H}^1(\mathbb{R}))$
of the time-evolution problem (\ref{evolution-psi}) with initial
data $\psi(x,0) = \psi_0(x)$.
\end{theorem}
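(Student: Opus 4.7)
The plan is to recast the problem in Duhamel form and run a standard contraction-mapping argument on $C([0,t_0],\mathcal{H}^1(\mathbb{R}))$. Since $L = -\partial_x^2 + V(x)$ is self-adjoint on $L^2(\mathbb{R})$ with form domain $\mathcal{H}^1(\mathbb{R})$, the shifted operator $L - \hat{\omega}_{l,0}$ generates a unitary group $S(t) = e^{-i(L-\hat{\omega}_{l,0})t}$ on $L^2(\mathbb{R})$, and because the associated quadratic form is conserved under the flow, $S(t)$ is an isometry on $\mathcal{H}^1(\mathbb{R})$ as well. I would therefore rewrite (\ref{evolution-psi}) as
\begin{equation*}
\psi(t) = S(t)\psi_0 - i\mu\int_0^t S(t-s)\bigl[ R(\vec{\mbox{\boldmath $\phi$}}(\mu s)) + \sigma N(\vec{\mbox{\boldmath $\phi$}}(\mu s),\psi(s)) \bigr]\,ds
\end{equation*}
and define $\mathcal{F}[\psi]$ to be the right-hand side.

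Next I would show that $\mathcal{F}$ maps a ball $B_\rho\subset C([0,t_0],\mathcal{H}^1(\mathbb{R}))$ of radius $\rho := 2\|\psi_0\|_{\mathcal{H}^1}$ into itself for $t_0$ sufficiently small. Because $\vec{\mbox{\boldmath $\phi$}}\in C^1(\mathbb{R},l^1(\mathbb{Z}))$, the term $\partial_T\varphi_1$ in $N$ is well-defined in $\mathcal{H}^1(\mathbb{R})$ (using Lemmas \ref{lemma-LS-reductions-estimate} and \ref{lemma-Banach-algebra} applied to $-\sigma|\varphi_0|^2\varphi_0$), and Lemma \ref{lemma-reductions} provides the bounds $\|R\|_{\mathcal{H}^1}\le C_R\|\vec{\mbox{\boldmath $\phi$}}\|_{l^1}$ and $\|N(\vec{\mbox{\boldmath $\phi$}},\psi)\|_{\mathcal{H}^1}\le C_N(\|\vec{\mbox{\boldmath $\phi$}}\|_{l^1}+\|\psi\|_{\mathcal{H}^1})$. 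Using unitarity of $S(t)$, we get $\|\mathcal{F}[\psi](t)\|_{\mathcal{H}^1}\le\|\psi_0\|_{\mathcal{H}^1}+\mu t_0\,C(\delta_1,\rho)$, which is bounded by $\rho$ for small $t_0$.

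For the contraction property, I would exploit that $N(\vec{\mbox{\boldmath $\phi$}},\psi)$ is a polynomial of degree $\le 3$ in $\psi,\bar\psi$ with coefficients built from $\varphi_0,\varphi_1\in\mathcal{H}^1(\mathbb{R})$. Since $\mathcal{H}^1(\mathbb{R})$ is a Banach algebra (Lemma \ref{lemma-Banach-algebra}), the differences $N(\vec{\mbox{\boldmath $\phi$}},\psi_1)-N(\vec{\mbox{\boldmath $\phi$}},\psi_2)$ satisfy a Lipschitz estimate in $\mathcal{H}^1(\mathbb{R})$ on bounded sets, with constant depending only on $\delta_1$, $\rho$, and $\mu_0$. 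Unitarity of $S(t)$ again pulls out an overall factor $\mu t_0$, so for $t_0$ small enough $\mathcal{F}$ is a strict contraction on $B_\rho$, giving a unique fixed point $\psi\in C([0,t_0],\mathcal{H}^1(\mathbb{R}))$. The $C^1$ regularity in $t$ follows by differentiating the Duhamel formula and reading off $\psi_t$ from (\ref{evolution-psi}), using that $R$ and $N(\vec{\mbox{\boldmath $\phi$}},\psi)$ are continuous in $t$ with values in $\mathcal{H}^1(\mathbb{R})$.

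The main obstacle is the Lipschitz estimate for $N$ in the $\mathcal{H}^1$-norm: the term $\mu^2|\varphi_1+\psi|^2(\varphi_1+\psi)$ requires controlling a cubic polynomial in $\psi$ in a Sobolev norm that includes a potential-weighted piece. This is precisely where the Banach algebra property of $\mathcal{H}^1(\mathbb{R})$ (Lemma \ref{lemma-Banach-algebra}) and the $\mathcal{H}^1$-bound on $\varphi_1$ from Lemma \ref{lemma-LS-reductions-estimate} combine to save the argument; beyond that, the proof is a routine application of the Banach fixed-point theorem.
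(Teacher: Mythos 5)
Your proposal is correct and follows essentially the same route as the paper: the paper likewise rewrites (\ref{evolution-psi}) in Duhamel form using the group generated by the self-adjoint operator $L-\hat{\omega}_{l,0}$ and invokes the bounds of Lemma \ref{lemma-reductions} together with the contraction mapping principle for small $t_0$. Your write-up merely fills in details the paper leaves as ``standard analysis'' (the self-mapping of a ball, the Lipschitz estimate for $N$ via the Banach algebra property, and the $C^1$ regularity from differentiating the Duhamel formula).
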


\begin{proof}
Since $L$ is a self-adjoint operator, the operator $e^{-i t (L -
\hat{\omega}_{l,0})}$ forms a strongly continuous semi-group and
$$
\| e^{-i t (L - \hat{\omega}_{l,0})} \| \leq K_0,
$$
for some $K_0 > 0$ uniformly in $t \in \mathbb{R}_+$. Using the variation
of constant formula, we rewrite the time-evolution problem
(\ref{evolution-psi}) in the integral form
\begin{equation}
\label{integral-equation} \psi(t) = e^{-i t (L -
\hat{\omega}_{l,0})} \psi_0 + \int_0^t e^{-i (t-s) (L -
\hat{\omega}_{l,0})} \left( \mu R(\mbox{\boldmath
$\vec{\phi}$}(s)) + \mu \sigma N(\mbox{\boldmath
$\vec{\phi}$}(s),\psi(s)) \right) ds.
\end{equation}
By using bounds (\ref{bound-map}) on $R(\mbox{\boldmath
$\vec{\phi}$})$ and $N(\mbox{\boldmath $\vec{\phi}$},\psi)$ and
the contraction mapping principle for sufficiently small $t_0
> 0$, one can show with a standard analysis that there exists a unique solution
$\psi(t) \in C^1([0,t_0],{\cal H}^1(\mathbb{R}))$ of the integral
equation (\ref{integral-equation}).
\end{proof}

\begin{theorem}
\label{lemma-wellposedness} Let $\vec{\mbox{\boldmath $\phi$}}_0
\in l^1(\mathbb{Z})$. Then, there exist a $T_0 > 0$ and a unique
solution $\vec{\mbox{\boldmath $\phi$}}(T) \in C^1([0,
T_0],l^1(\mathbb{Z}))$ of the DNLS equation (\ref{DNLS}) with
initial data $\vec{\mbox{\boldmath $\phi$}}(0) =
\vec{\mbox{\boldmath $\phi$}}_0$.
\end{theorem}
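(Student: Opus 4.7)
The plan is to treat the DNLS equation (\ref{DNLS}) as an abstract ODE in the Banach space $l^1(\mathbb{Z})$ and apply a standard Picard iteration / contraction-mapping argument. The first step is to verify that both the linear and the nonlinear parts of the right-hand side define Lipschitz continuous maps on bounded sets of $l^1(\mathbb{Z})$.

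For the linear part $(\Delta \vec{\mbox{\boldmath $\phi$}})_n = \phi_{n+1} + \phi_{n-1}$, one checks directly that $\|\Delta \vec{\mbox{\boldmath $\phi$}}\|_{l^1} \leq 2\|\vec{\mbox{\boldmath $\phi$}}\|_{l^1}$, so multiplication by $\alpha$ gives a bounded linear operator on $l^1(\mathbb{Z})$. For the cubic term, I would use the fact, already exploited in the proof of Lemma \ref{lemma-reductions}, that $l^1(\mathbb{Z})$ is a Banach algebra under pointwise multiplication: $\|u \cdot v\|_{l^1} \leq \|u\|_{l^\infty}\|v\|_{l^1} \leq \|u\|_{l^1}\|v\|_{l^1}$. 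Iterating this twice gives $\||\phi|^2 \phi\|_{l^1} \leq \|\vec{\mbox{\boldmath $\phi$}}\|_{l^1}^3$, and a standard algebraic identity $|a|^2 a - |b|^2 b = (|a|^2 + |b|^2)(a-b)/2 + (a^2-b^2)(\bar a + \bar b)/2 + \ldots$ (expanded out termwise) yields the local Lipschitz bound $\| |\vec{\mbox{\boldmath $\phi$}}|^2 \vec{\mbox{\boldmath $\phi$}} - |\vec{\mbox{\boldmath $\psi$}}|^2 \vec{\mbox{\boldmath $\psi$}}\|_{l^1} \leq 3 \max(\|\vec{\mbox{\boldmath $\phi$}}\|_{l^1},\|\vec{\mbox{\boldmath $\psi$}}\|_{l^1})^2 \|\vec{\mbox{\boldmath $\phi$}} - \vec{\mbox{\boldmath $\psi$}}\|_{l^1}$.

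With these bounds in hand, I would recast (\ref{DNLS}) as the integral equation
\begin{equation*}
\vec{\mbox{\boldmath $\phi$}}(T) = e^{-i\alpha T \Delta}\vec{\mbox{\boldmath $\phi$}}_0 - i\sigma\beta \int_0^T e^{-i\alpha(T-s)\Delta} |\vec{\mbox{\boldmath $\phi$}}(s)|^2 \vec{\mbox{\boldmath $\phi$}}(s)\, ds,
\end{equation*}
where $e^{-i\alpha T \Delta}$ is defined by the (norm-convergent) power series for $T \in \mathbb{R}$ and satisfies $\|e^{-i\alpha T \Delta}\|_{l^1 \to l^1} \leq e^{2|\alpha| |T|}$. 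Fixing $\delta = 2\|\vec{\mbox{\boldmath $\phi$}}_0\|_{l^1}$ and a time $T_0 > 0$ sufficiently small relative to $\delta$, $\alpha$, and $\beta$, the right-hand side defines a contraction on the closed ball of radius $\delta$ in $C([0,T_0], l^1(\mathbb{Z}))$, yielding a unique mild solution. Differentiating the integral equation and using continuity of $\vec{\mbox{\boldmath $\phi$}}$ together with boundedness of $\Delta$ and the Banach algebra property upgrades this to $\vec{\mbox{\boldmath $\phi$}} \in C^1([0,T_0], l^1(\mathbb{Z}))$.

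No step is a serious obstacle here; the only mild subtlety is the justification that $l^1$-valued Picard iterates converge uniformly on $[0,T_0]$, which is immediate from the Lipschitz estimates above and the standard fixed-point argument. I would note the proof is entirely analogous to the contraction-mapping argument sketched in the proof of Theorem \ref{corollary-wellposedness} and therefore omit most of the routine verifications.
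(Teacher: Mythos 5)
Your proposal is correct and follows essentially the same route as the paper: both recast the DNLS equation as an integral equation in $l^1(\mathbb{Z})$, use the boundedness of the shift operator and the Banach algebra property of $l^1(\mathbb{Z})$ to get local Lipschitz bounds on the right-hand side, and conclude by the contraction mapping principle for small $T_0$. The only cosmetic difference is that you place the linear part in the propagator $e^{-i\alpha T \Delta}$ while the paper keeps it under the integral sign; since $\Delta$ is bounded on $l^1(\mathbb{Z})$, the two formulations are interchangeable.
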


\begin{proof}
By the variation of constant formula, we have
$$
\vec{\mbox{\boldmath $\phi$}}(T) = \vec{\mbox{\boldmath $\phi$}}_0
- i \int_0^T \left( \alpha \Delta \vec{\mbox{\boldmath $\phi$}}(s)
+ \sigma \beta \mbox{\boldmath $\Gamma$}(\vec{\mbox{\boldmath
$\phi$}}(s)) \right) ds,
$$
where $(\Delta \vec{\mbox{\boldmath $\phi$}})_n = \phi_{n+1} +
\phi_{n-1}$ and $(\mbox{\boldmath $\Gamma$}(\vec{\mbox{\boldmath
$\phi$}}))_n = |\phi_n|^2 \phi_n$. Since $l^1(\mathbb{Z})$ forms a
Banach algebra, the right-hand-side of the integral equation maps
an element of $l^1(\mathbb{Z})$ to an element of
$l^1(\mathbb{Z})$. Therefore, there exists a unique solution
$\vec{\mbox{\boldmath $\phi$}}(T) \in C^1([0,
T_0],l^1(\mathbb{Z}))$ of the integral equation for sufficiently
small $T_0 > 0$.
\end{proof}

We can now formulate the main theorem of our article.

\begin{theorem}
Fix $l \in \mathbb{N}$ and let $\vec{\mbox{\boldmath $\phi$}}(T)
\in C^1([0, T_0],l^1(\mathbb{Z}))$ be a solution of the DNLS
equation (\ref{DNLS}) with initial data $\vec{\mbox{\boldmath
$\phi$}}(0) = \vec{\mbox{\boldmath $\phi$}}_0$ satisfying the
bound
\begin{equation}
\label{bound-initial} \left\| \phi_0 - \mu^{1/2} \sum_{n \in
\mathbb{Z}} \phi_n(0) \hat{u}_{l,n}(x) \right\|_{{\cal
H}^1(\mathbb{R})} \leq C_0 \mu^{3/2}
\end{equation}
for some $C_0 > 0$. Then, for any $\mu \in (0,\mu_0)$ with
sufficiently small $\mu_0 > 0$, there exists a $\mu$-independent
constant $C > 0$ such that equation (\ref{GP}) has a solution
$\phi(t) \in C^1([0,T_0/\mu],{\cal H}^1(\mathbb{R}))$ satisfying
the bound
\begin{equation}
\label{bound-main} \forall t \in \left[ 0, T_0/\mu \right] : \quad
\left\| \phi(\cdot,t) - \mu^{1/2} \sum_{n \in \mathbb{Z}}
\phi_n(T) \hat{u}_{l,n} \right\|_{{\cal H}^1(\mathbb{R})} \leq C
\mu^{3/2}.
\end{equation} \label{theorem-main}
\end{theorem}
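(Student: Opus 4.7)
The plan is to reduce the theorem to an a priori bound on the remainder $\psi$ in ${\cal H}^1(\mathbb{R})$ on the long time interval $[0, T_0/\mu]$ via the Wannier ansatz
$$
\phi(x,t) = \mu^{1/2}\bigl(\varphi_0(x,T) + \mu(\varphi_1(x,T) + \psi(x,t))\bigr)e^{-i\hat{\omega}_{l,0}t},
$$
with $\varphi_0$, $\varphi_1$ and $\psi$ as introduced before (\ref{evolution-psi}). Lemmas \ref{lemma-LS-reductions-estimate} and \ref{lemma-Banach-algebra} imply $\|\varphi_1\|_{{\cal H}^1(\mathbb{R})} \le C_1\|\vec{\mbox{\boldmath $\phi$}}\|_{l^1(\mathbb{Z})}^3$, so the hypothesis (\ref{bound-initial}) translates into a $\mu$-independent bound $\|\psi(\cdot,0)\|_{{\cal H}^1(\mathbb{R})} \le Y_0$. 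Because the phase factor $e^{-i\hat{\omega}_{l,0}t}$ has modulus one, (\ref{bound-main}) is then equivalent to showing $\|\psi(\cdot,t)\|_{{\cal H}^1(\mathbb{R})} \le Y_\star$ on $[0, T_0/\mu]$ for some $\mu$-independent $Y_\star$.

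For the a priori bound I would use an energy argument built on the quadratic form of the non-negative self-adjoint operator $L+1$: by (\ref{norm-space}) its form domain is ${\cal H}^1(\mathbb{R})$ and the energy $E(t) := \|\psi(t)\|_{{\cal H}^1(\mathbb{R})}^2$ coincides with the sesquilinear form $(u,v)_{{\cal H}^1(\mathbb{R})} = \int_{\mathbb{R}}(u'\bar v' + Vu\bar v + u\bar v)dx$ evaluated at $(\psi,\psi)$. Differentiating in time using (\ref{evolution-psi}), the $(L-\hat{\omega}_{l,0})\psi$-contribution is real (hence cancels after the factor $-i$), leaving
$$
\frac{dE}{dt} = 2\mu\,{\rm Im}\bigl[(R(\vec{\mbox{\boldmath $\phi$}}),\psi)_{{\cal H}^1(\mathbb{R})} + \sigma(N(\vec{\mbox{\boldmath $\phi$}},\psi),\psi)_{{\cal H}^1(\mathbb{R})}\bigr].
$$
Setting $\delta_\phi := \sup_{T\in[0,T_0]}\|\vec{\mbox{\boldmath $\phi$}}(T)\|_{l^1(\mathbb{Z})}$ (finite by Theorem \ref{lemma-wellposedness}), Cauchy--Schwarz and Lemma \ref{lemma-reductions} give
$$
\frac{d}{dt}\sqrt{E} \le \mu\bigl[(C_R + C_N)\delta_\phi + C_N\sqrt{E}\bigr],
$$
and Gronwall's inequality yields $\sqrt{E(t)} \le Y_0 e^{\mu C_N t} + \frac{(C_R+C_N)\delta_\phi}{C_N}(e^{\mu C_N t}-1) \le Y_\star$ uniformly for $t \in [0,T_0/\mu]$. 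Choosing the radius $\delta_2$ in Lemma \ref{lemma-reductions} larger than $Y_\star$, a standard bootstrap of the local existence Theorem \ref{corollary-wellposedness} extends $\psi$ to all of $[0, T_0/\mu]$ with the same bound, from which (\ref{bound-main}) follows via $\mu^{3/2}\|\varphi_1 + \psi\|_{{\cal H}^1(\mathbb{R})} \le \mu^{3/2}(C_1\delta_\phi^3 + Y_\star)$.

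The principal technical obstacle will be the rigorous justification of the energy identity: since $\psi$ is only known to lie in the form domain ${\cal H}^1(\mathbb{R})$ and not in the operator domain $D(L)$, the pairings $(\psi,(L+1)R)$ and $(\psi,(L+1)N)$ must be interpreted through the sesquilinear form above, and the time differentiation of $E$ should be handled by a Friedrichs-mollifier approximation applied to the mild formulation (\ref{integral-equation}), verifying that the self-adjoint contribution cancels in the smooth approximants before passing to the limit. Once this cancellation is secured, the remainder is the routine Gronwall/continuation bootstrap, and it is crucial that the constants $C_R, C_N$ in Lemma \ref{lemma-reductions} are genuinely $\mu$-independent, which rests on properties (iii) of Propositions \ref{proposition-bands} and \ref{proposition-Wannier-limit}.
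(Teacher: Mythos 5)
Your proposal is correct and follows essentially the same route as the paper: the same decomposition $\phi = \mu^{1/2}(\varphi_0 + \mu(\varphi_1+\psi))e^{-i\hat\omega_{l,0}t}$, the same energy quantity $\|\psi\|_{{\cal H}^1(\mathbb{R})}$ differentiated along (\ref{evolution-psi}) with cancellation of the self-adjoint part, the bounds of Lemma \ref{lemma-reductions}, and Gronwall on $[0,T_0/\mu]$ (this is exactly Lemma \ref{time-estimate} and Corollary \ref{corollary-estimate}). Your added care about the continuation/bootstrap step and about justifying the energy identity on the form domain goes beyond what the paper records, but does not change the argument.
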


\begin{remark}
{\rm Since $\mu = \varepsilon e^{-a/\varepsilon}$, the finite
interval $[0,T_0/\mu]$ is exponentially large with respect to
parameter $\varepsilon$, similar to the bounds obtained in
\cite{BS}.}
\end{remark}

Theorem \ref{theorem-main} is proved in the following section.

\section{Bounds on the remainder terms}

We develop the proof of Theorem \ref{theorem-main} by using the
energy estimates for the time-evolution problem
(\ref{evolution-psi}) and Gronwall's inequality for a scalar
first-order differential equation. The GP equation (\ref{GP}) has
two conserved quantities
\begin{equation}
Q(\phi) = \int_{\mathbb{R}} |\phi|^2 dx, \quad E(\phi) =
\int_{\mathbb{R}} \left( |\phi_x|^2 + V(x) |\phi|^2 + \frac{1}{2}
\sigma |\phi|^4 \right) dx,
\end{equation}
which have the meaning of the charge and energy invariants, such
that $Q(\phi) = Q(\phi_0)$ and $E(\phi) = E(\phi_0)$ for any
solution $\phi(x,t)$ starting from the initial data $\phi_0(x)$.
We shall consider the quantity $E_Q(\psi) = \| \psi \|_{{\cal
H}^1(\mathbb{R})}$, which is not a constant in $t$ if $\psi(x,t)$
satisfies the time-evolution problem (\ref{evolution-psi}). The
time evolution of $E_Q(\psi)$ obeys the following estimate.

\begin{lemma}
Let $\mbox{\boldmath $\vec{\phi}$}(T) \in
C^1(\mathbb{R}_+,l^1(\mathbb{Z}))$ be any function and $\psi(t)
\in C^1([0,t_0],{\cal H}^1(\mathbb{R}))$ be a local solution of
the time-evolution problem (\ref{evolution-psi}) for some $t_0 >
0$. Then, for any $\mu \in [0,1]$ and every $M > 0$, there exist a
$\mu$-independent constant $C_E > 0$ such that
\begin{eqnarray}
\label{bound-1-2} \left| \frac{d}{d t}  E_Q(\psi) \right| \leq \mu
C_E \left( \| \mbox{\boldmath $\vec{\phi}$} \|_{l^1(\mathbb{Z})} +
E_Q(\psi) \right)
\end{eqnarray}
as long as $E_Q(\psi) \leq M$.
\label{time-estimate}
\end{lemma}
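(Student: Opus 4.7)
The plan is to work first with the squared norm
$$E_Q(\psi)^2 = \|\psi\|_{\mathcal{H}^1(\mathbb{R})}^2 = \langle \psi, (L+1)\psi\rangle,$$
where the right-hand side is understood as the quadratic form associated with $L+1$, in agreement with the definition (\ref{norm-space}). Differentiating in $t$ and using self-adjointness of $L+1$ yields $\frac{d}{dt}E_Q(\psi)^2 = 2\,\mathrm{Re}\langle\psi_t,(L+1)\psi\rangle$. I then substitute $\psi_t = -i(L-\hat{\omega}_{l,0})\psi - i\mu R(\vec{\phi}) - i\mu\sigma N(\vec{\phi},\psi)$ from (\ref{evolution-psi}) and observe that the linear contribution drops out: since $L-\hat{\omega}_{l,0}$ and $L+1$ are commuting self-adjoint operators, $\langle(L-\hat{\omega}_{l,0})\psi,(L+1)\psi\rangle$ is real, so multiplying by $-i$ and taking the real part gives zero.

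Next, I bound the two remaining contributions using the identity $\langle f,(L+1)g\rangle = \langle f,g\rangle_{\mathcal{H}^1(\mathbb{R})}$, valid for $f,g \in \mathcal{H}^1(\mathbb{R})$ by the quadratic-form structure, together with Cauchy--Schwarz. Applying Lemma \ref{lemma-reductions} then produces
$$|\langle R(\vec{\phi}),(L+1)\psi\rangle| \leq \|R(\vec{\phi})\|_{\mathcal{H}^1(\mathbb{R})}\,E_Q(\psi) \leq C_R\,\|\vec{\phi}\|_{l^1(\mathbb{Z})}\,E_Q(\psi),$$
and analogously $|\langle N(\vec{\phi},\psi),(L+1)\psi\rangle|\leq C_N\bigl(\|\vec{\phi}\|_{l^1(\mathbb{Z})} + E_Q(\psi)\bigr)E_Q(\psi)$. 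Collecting these yields $\bigl|\frac{d}{dt}E_Q(\psi)^2\bigr|\leq 2\mu\widetilde{C}\bigl(\|\vec{\phi}\|_{l^1(\mathbb{Z})} + E_Q(\psi)\bigr)E_Q(\psi)$ with $\widetilde{C}$ uniform in $\mu\in[0,1]$; the hypothesis $E_Q(\psi)\leq M$ is invoked here to legitimate applying Lemma \ref{lemma-reductions} with the $\psi$ argument inside $N$ kept in a bounded ball of $\mathcal{H}^1(\mathbb{R})$.

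Finally, I convert the squared-norm estimate to the claimed bound on $E_Q(\psi)$: wherever $E_Q(\psi)>0$ one has $\frac{d}{dt}E_Q(\psi) = \frac{1}{2E_Q(\psi)}\frac{d}{dt}E_Q(\psi)^2$, and at isolated zeros the upper Dini derivative of $E_Q(\psi)$ satisfies the same inequality, so taking $C_E = \widetilde{C}$ gives (\ref{bound-1-2}). I expect the main technical subtlety to be the clean identification $\langle f,(L+1)g\rangle_{L^2(\mathbb{R})} = \langle f,g\rangle_{\mathcal{H}^1(\mathbb{R})}$ for $f,g$ only in $\mathcal{H}^1(\mathbb{R})$ rather than in the $L^2$-domain of $L$; this is precisely why the space $\mathcal{H}^1(\mathbb{R})$ was defined through the quadratic form of $L+1$ in the first place, so every step above must be read at the level of quadratic forms rather than as pairings with unbounded operators. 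The vanishing of the linear piece via commuting self-adjoint operators and the passage from $E_Q^2$ to $E_Q$ are then routine.
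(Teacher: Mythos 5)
Your proposal is correct and follows essentially the same route as the paper: differentiate the squared $\mathcal{H}^1$ norm, note that the linear term $(L-\hat{\omega}_{l,0})\psi$ contributes nothing by self-adjointness, estimate the $R$ and $N$ contributions via Cauchy--Schwarz and Lemma \ref{lemma-reductions}, and cancel one power of $E_Q(\psi)$. The only differences are cosmetic --- you phrase the computation abstractly through the quadratic form $\langle\psi,(L+1)\psi\rangle$ where the paper writes out the integrals explicitly, and you add a remark on the Dini derivative at zeros of $E_Q$ that the paper omits.
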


\begin{proof}
By direct differentiation, for any $\psi(t) \in C^1([0,t_0],{\cal
H}^1(\mathbb{R})$), we have
\begin{eqnarray*}
\frac{d}{d t} \| \psi \|^2_{{\cal H}^1(\mathbb{R})} & = & - i \mu
\int_{\mathbb{R}} \left( \bar{\psi}_x R_x(\mbox{\boldmath
$\vec{\phi}$}) - \psi_x \bar{R}_x(\mbox{\boldmath $\vec{\phi}$})
\right) dx - i \mu \sigma \int_{\mathbb{R}} \left( \bar{\psi}_x
N_x(\mbox{\boldmath $\vec{\phi}$},\psi) - \psi_x
\bar{N}_x(\mbox{\boldmath $\vec{\phi}$},\psi) \right) dx \\ &
\phantom{t} & - i \mu \int_{\mathbb{R}} \left( 1 + V(x) \right)
\left( \bar{\psi} R(\mbox{\boldmath $\vec{\phi}$}) - \psi
\bar{R}(\mbox{\boldmath $\vec{\phi}$}) \right) dx \\ & \phantom{t}
& - i \mu \sigma \int_{\mathbb{R}} \left( 1 + V(x) \right) \left(
\bar{\psi} N(\mbox{\boldmath $\vec{\phi}$},\psi) - \psi
\bar{N}(\mbox{\boldmath $\vec{\phi}$},\psi) \right) dx.
\end{eqnarray*}
Using the Cauchy--Schwartz inequalities and the bounds
(\ref{bound-map}) of Lemma \ref{lemma-reductions}, we obtain
\begin{eqnarray*}
\left| \frac{d}{d t} \| \psi \|^2_{{\cal H}^1(\mathbb{R})} \right|
& \leq & 6 \mu \| \psi \|_{{\cal H}^1(\mathbb{R})} \left( (C_R +
C_N) \| \mbox{\boldmath $\vec{\phi}$} \|_{l^1(\mathbb{Z})} + C_N
\| \psi \|_{{\cal H}^1(\mathbb{R})}\right),
\end{eqnarray*}
where $\sigma = \pm 1$ has been used. By canceling one power of
$\| \psi \|_{{\cal H}^1(\mathbb{R})}$, we arrive to the bound
(\ref{bound-1-2}).
\end{proof}

Theorem \ref{theorem-main} is then a direct consequence of the
following corollary.

\begin{corollary}
A local solution $\psi(t) \in C^1([0,T_0/\mu],{\cal
H}^1(\mathbb{R}))$ of the time-evolution problem
(\ref{evolution-psi}) for any $\psi(0) \in {\cal H}^1(\mathbb{R})$
and any $\mbox{\boldmath $\vec{\phi}$}(T) \in C^1([0,
T_0],l^1(\mathbb{Z}))$ satisfies the bound
\begin{equation}
\label{bound-time} \sup_{t \in [0,T_0/\mu]} \| \psi(t) \|_{{\cal
H}^1(\mathbb{R})} \leq \left( \| \psi(0) \|_{{\cal
H}^1(\mathbb{R})} + C_E T_0  \sup_{T \in [0,T_0]} \|
\mbox{\boldmath $\vec{\phi}$}(T) \|_{l^1 (\mathbb{Z})}  \right)
e^{C_E T_0}.
\end{equation}
\label{corollary-estimate}
\end{corollary}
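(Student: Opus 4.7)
The plan is to integrate the differential inequality of Lemma \ref{time-estimate} by a standard Gronwall argument, exploiting the fact that on the rescaled time interval $t \in [0,T_0/\mu]$ the accumulated growth rate $\mu C_E t$ is bounded by the $\mu$-independent quantity $C_E T_0$. So the estimate (\ref{bound-time}), while valid on an interval that grows like $\mu^{-1}$, only involves an exponential $e^{C_E T_0}$ of fixed size.

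Concretely, I would set $y(t) = \| \psi(t) \|_{{\cal H}^1(\mathbb{R})}$ and $a = \sup_{T \in [0,T_0]} \| \vec{\mbox{\boldmath $\phi$}}(T) \|_{l^1(\mathbb{Z})}$, which is finite since $\vec{\mbox{\boldmath $\phi$}} \in C^1([0,T_0],l^1(\mathbb{Z}))$ and $[0,T_0]$ is compact. Next I would fix an $M > 0$ large enough that the prospective right-hand side of (\ref{bound-time}) is strictly less than $M$, where $C_E = C_E(M)$ is the constant produced by Lemma \ref{time-estimate} at this level $M$; this pins down a single constant $C_E$ for the remainder of the argument. Let $t_\star \in (0, T_0/\mu]$ be the largest time up to which $y(t) \leq M$; such a $t_\star > 0$ exists by continuity of $\psi$ in ${\cal H}^1(\mathbb{R})$.

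On $[0,t_\star]$, Lemma \ref{time-estimate} applies and gives the one-sided bound $y'(t) \leq \mu C_E a + \mu C_E y(t)$. Classical Gronwall's inequality (equivalently, explicit integration of the linear comparison ODE) then yields
\begin{equation*}
y(t) \leq y(0) e^{\mu C_E t} + a \bigl( e^{\mu C_E t} - 1 \bigr), \qquad t \in [0, t_\star].
\end{equation*}
Using $\mu t \leq T_0$ for $t \in [0,T_0/\mu]$ together with the elementary inequality $e^x - 1 \leq x e^x$ for $x \geq 0$ applied with $x = C_E T_0$, this reduces to $y(t) \leq \bigl( y(0) + a C_E T_0 \bigr) e^{C_E T_0}$, which is the desired bound (\ref{bound-time}). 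Since, by the choice of $M$, the right-hand side is strictly below $M$, continuity of $y$ forces $t_\star = T_0/\mu$, so the estimate holds on the entire interval.

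The only subtle point is the bootstrap selection of $M$, since the constant $C_E$ furnished by Lemma \ref{time-estimate} depends on $M$. In isolation this requires checking that the target Gronwall bound lies below the level $M$ at which $C_E$ is evaluated; in the intended application inside the proof of Theorem \ref{theorem-main}, $\psi$ is tracked as a perturbation of order $\mu^{1/2}$ with $\| \psi(0) \|_{{\cal H}^1(\mathbb{R})} = O(\mu^{3/2})$ by (\ref{bound-initial}), so any fixed $M$ of order unity closes the bootstrap for sufficiently small $\mu$. Everything else is the routine conversion of the rate estimate $|E_Q(\psi)'| \lesssim \mu (a + E_Q(\psi))$ into the integrated bound via Gronwall.
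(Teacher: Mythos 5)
Your proof is correct and follows essentially the same route as the paper: integrate the differential inequality of Lemma \ref{time-estimate} and apply Gronwall's inequality, using $\mu t \leq T_0$ to make the exponential factor $\mu$-independent. The only difference is that you additionally supply the continuation (bootstrap) argument needed because the constant $C_E$ of Lemma \ref{time-estimate} depends on the a priori level $M$ --- a point the paper's two-line proof passes over in silence --- so your version is, if anything, slightly more complete than the original.
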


\begin{proof}
By using the bound (\ref{bound-1-2}), we obtain
$$
E_Q(\psi(t)) \leq E_Q(\psi_0) + \mu C_E \int_0^t \left(   \|
\vec{\mbox{\boldmath $\phi$}}(\mu s) \|_{l^1(\mathbb{Z})} +
E_Q(\psi(s)) \right) ds.
$$
The bound (\ref{bound-time}) follows by Gronwall's inequality on
$t \in [0,T_0/\mu]$.
\end{proof}

\vspace{0.5cm}

{\bf Acknowledgement.} The work of D. Pelinovsky is supported by the
EPSRC and Humboldt Research Fellowships. The work of G. Schneider is
supported  by the Graduiertenkolleg 1294 ``Analysis, simulation and
design of nano-technological processes'' granted by the Deutsche
Forschungsgemeinschaft (DFG) and the Land Baden-W\"{u}rttemberg.

\end{document}